\newcommand{\Hc}{\mathcal{H}}
\newcommand{\Gc}{\mathcal{G}}
\newcommand{\argmax}{\mathrm{argmax}}
\newcommand{\F}{\mathbb{F}} 
\newcommand{\mN}{\mathbb{N}}
\newcommand{\mZ}{\mathbb{Z}}
\newcommand{\pp}{\mathbb{P}}
\newcommand{\E}{\mathbb{E}}
\newcommand{\A}{\mathcal{A}}
\newcommand{\1}{\mathbb{1}}
\newcommand{\dist}{\mathrm{dist}}
\newcommand{\Pc}{\mathcal{P}}
\newtheorem{theorem}{Theorem}
\numberwithin{theorem}{section}
\newtheorem{property}[theorem]{Property}
\newtheorem{claim}[theorem]{Claim}
\newtheorem{lemma}[theorem]{Lemma}
\newtheorem{corollary}[theorem]{Corollary}
\newtheorem{definition}[theorem]{Definition}
\newtheorem{remark}[theorem]{Remark}
\newtheorem{conjecture}[theorem]{Conjecture}
\numberwithin{equation}{section}
\title{
Polynomial Freiman-Ruzsa, Reed-Muller codes and Shannon capacity}
\author{E. Abbe}
\address{Mathematics Institute, EPFL}
\curraddr{}
\email{emmanuel.abbe@epfl.ch}
\thanks{}
\author{C. Sandon}
\address{Mathematics Institute, EPFL}
\curraddr{}
\email{colin.sandon@epfl.ch}
\thanks{}
\author{V. Shashkov}
\address{Mathematics Institute, EPFL}
\curraddr{}
\email{vladyslav.shashkov@epfl.ch}
\thanks{}
\author{M. Viazovska}
\address{Mathematics Institute, EPFL}
\curraddr{}
\email{maryna.viazovska@epfl.ch}
\thanks{}
\subjclass[2020]{Primary: 94B70, Secondary: 94A24, 94A17}
\date{}
\dedicatory{}
\begin{document}

\begin{abstract}
In 1948, Shannon used a probabilistic argument to show the existence of codes achieving a maximal rate defined by the channel capacity. In 1954, Muller and Reed introduced a simple deterministic code construction based on polynomial evaluations, which was conjectured and eventually proven to achieve capacity. Meanwhile, polarization theory emerged as an analytic framework to prove capacity results for a variation of RM codes -- the polar codes. Polarization theory further gave a powerful framework for various other code constructions, but it remained unfulfilled for RM codes. In this paper, we settle the establishment of a polarization theory for RM codes, which implies in particular that RM codes have a vanishing local error below capacity. Our proof puts forward a striking connection with the recent proof of the Polynomial Freiman-Ruzsa conjecture \cite{gowers2023conjecturemarton} and an entropy extraction approach related to \cite{AY18}. It further puts forward a {\it small orbit localization lemma} of potential broader applicability in combinatorial number theory. Finally, a new additive combinatorics conjecture is put forward, with potentially broader applications to coding theory. 
\end{abstract}

\maketitle

\newpage
\vspace{-.5cm}
{\footnotesize
\tableofcontents}
\thispagestyle{empty}
\newpage

\setcounter{page}{1}

\section{Coding problem}
Shannon introduced in 1948 the notion of channel capacity \cite{shannon48}, as the largest rate at which messages can be reliably transmitted over a noisy channel. In particular, for the canonical binary symmetric channel which flips every coordinate of a codeword independently with probability $\epsilon$, Shannon's capacity is $1-H(\epsilon)$, where $H$ is the binary entropy function. To show that the capacity is achievable, Shannon used a probabilistic argument, i.e., a code drawn uniformly at random.\footnote{There is also the `worst-case' or `Hamming'  \cite{Hamming50} formulation of the coding problem, where codewords have to be recovered with probability 1 when corrupted by an error rate at most $\epsilon$; there random codes achieve rates up to $1-H(2 \epsilon)$ (or more precisely $1-H(\min(2 \epsilon,1/2))$ as we may have $\epsilon>1/4$),  as codewords there must produce a strict sphere packing of $\epsilon n$-radius spheres (i.e.,  a distance of $2\epsilon n$).}

Obtaining explicit code constructions achieving this limit has since then generated major research activity across electrical engineering, computer science and mathematics. The first decades of coding theory from the 1950s had been dominated by `algebraic codes' \cite{Macwilliams77}, in particular constructions based on polynomial evaluations such as RM codes. While some outstanding constructions were obtained for finite sets of parameters, e.g., the Hamming or Golay codes \cite{Macwilliams77}, proving formal guarantees for algebraic codes in the Shannon setting had seen little progress in the first decades. In the 90s, graph-based codes started to attract major attention, in particular with LDPC codes \cite{Gallager65}, and a proof that expander codes achieve Shannon capacity for the special case of the erasure channel \cite{expander}. The large class of LDPC codes and Turbo codes have also seen major practical developments in telecommunications \cite{Richardson08}. More recently, polar codes \cite{Arikan09} brought a new angle to coding theory, providing a framework, polarization theory, to establish formal guaranties of code achievement of capacity on a symmetric channel. Polar codes are  closely related to RM codes as will be discussed next. In a sense, polar codes are a variant of RM codes with a simplified recursive framework  that allows for simpler proofs and decoders, at the cost of poorer performance metrics such as error rates and distance (the code construction is also less trivial albeit still efficient). Reed-Muller codes were also assumed to have a polarization property, with a conjecture proposed in \cite{Ye19}. However, attempts to establish a polarization result for entropies of individual bits were incomplete. In particular, \cite{Ye19,Hazla21} managed to establish partial-order monotonicity property for the bit entropies. A full monotonicity of bit entropies remained nonetheless out of reach. Block entropies instead (defined below), benefit from a full monotonicity, but lack a polarization result. This will be established here with a connection to the  recently-established Freiman-Ruzsa theorem \cite{gowers2023conjecturemarton}. This connection is achieved in particular with a new \textit{orbit localization lemma} (see Section \ref{}), of potential independent interest in additive combinatorics. 

To introduce the notion of codes achieving capacity, we have to define some key quantities in coding theory. Consider transmitting a message $u \in \mathbb{F}_2^k$ on a noisy channel. To protect the message from the noise, it is embedded in a larger dimension. We define a linear embedding $x:\F_2^k\to\F_2^n$, mapping the message $u$ to the codeword $x=x(u) \in \F_2^n$. The image of $x$ is the linear code that we are going to study. The number $n$ is called the length (or blocklength) of the code, $k$ is the dimension of the code, and the ratio $R:=\frac{k}{n}$ is called the code rate.
The channel model describes the distribution of $\Tilde{x}$ obtained from transmitting $x$. We focus on the central case of the binary symmetric channel (BSC), defined by the following transition probability:
\[P(\Tilde{x}\mid x):=\delta^{|\{i \in [n]\mid  \Tilde{x}_i \neq x_i\}|} (1-\delta)^{|\{i \in [n]\mid  \Tilde{x}_i = x_i\}|}.\]
In simpler terms, the output of the channel is a corruption with i.i.d. Bernoulli noise, i.e., $\Tilde{x}=x+Z,\,\, Z\sim Ber(\delta)^{\F_2^m}$. If $\delta =\Omega_n(1)$ and $n=k$ ($R=1$), we cannot hope to recover $x$ with high probability. However, if $\delta=\Omega_n(1)<1/2$ and $R<1$, one can still hope to recover $x$ with high probability depending on the tradeoff between $R$ and $\delta$. 


We will focus here on linear codes.
Let $G\in \mathbb{F}_2^{n \times k}$ be a fixed matrix, often conforming to some recurrent structure along parameters $n$ and $k$. Further, to conform to several code rates, a matrix $G_{full}\in \mathbb{F}_2^{n \times n}$ is defined and $G$ will correspond to a sub-matrix of $G_{full}$ depending on the rate. In this case, $G$ is the code generation matrix and $G_{full}$ is the matrix defined for constructing code generation matrices of flexible rate.
The transmitted codeword will be $Gu$ and the goal is to recover $u$ from $Gu+Z$ with high probability.
In order to minimize the probability of decoding $u$ incorrectly, the optimal algorithm is the  maximum likelihood decoder $\Hat{X}(Y)$ which, in the case of the BSC, outputs the closest codeword to the received word. Assume $U \sim Unif(\mathbb{F}_2^k),\,\, X=GU,\,\, Y=X+Z,\,\, Z \sim Ber(\delta)^{\F_2^m}$.
\begin{itemize}
\item The \textbf{\textit{bit-error probability}} is defined as follows:
\begin{align*}
P_{\mathrm{bit}}:=\max_{i\in[n]} P_{\mathrm{bit},i}:=\max_{i \in [n]} \mathbb{P}(\widehat{X_i}(Y) \ne X_i),
\end{align*}
where $\widehat{X_i}(Y)$ denotes the most likely value of $X_i$ given $Y$, i.e., $\widehat{X_i}(Y)=\argmax_{x_i \in \F_2}\pp(X_i=x_i|Y)$.
\item The \textbf{\textit{block-error probability}} (also called global error) is defined as follows:
\[P_{\mathrm{block}}:=\pp(\Hat{X}(Y)\neq X).\]
\end{itemize}
\begin{definition}[Codes that achieve capacity]
    Consider a family of codes $\left(C_j\right)_{j=1}^{+\infty}$ of length $n_j$ and dimension $k_j$. Let the sequence of rates $\{r_j\}$ defined by $r_j=\frac{k_j}{n_j}$ satisfy $\lim_{j \rightarrow +\infty} r_j=r$. Let $H:[0,\frac{1}{2}]\rightarrow[0,1]$ denote the entropy function.
    \begin{itemize}
        \item For a binary symmetric channel, we say that $\left(C_j\right)_{j=1}^{+\infty}$ achieves the capacity in the weak sense if $\lim_{j \rightarrow +\infty}P_{\mathrm{bit}}(C_j,\,\delta)=0$ for any $\delta \in [0,H^{-1}(1-r))$.
        \item For a binary symmetric channel, we say that $\left(C_j\right)_{j=1}^{+\infty}$ achieves the capacity in the strong sense if $\lim_{j \rightarrow +\infty}P_{\mathrm{block}}(C_j,\,\delta)=0$ for any $\delta \in [0,H^{-1}(1-r))$.
    \end{itemize}
\end{definition}
\begin{remark}
    For any $\delta>H^{-1}(1-r)$, $P_{\mathrm{block}}(C_j,\,\delta)=\Omega_j(1)$. However, remarkably, Shannon has shown that a code drawn uniformly at random achieves capacity in the strong sense with a high probability \cite{shannon48}.  This, in particular, implies that there exist code sequences that achieve capacity in the strong sense. In this paper, of interest is the Reed-Muller code family. We provide an alternative proof to that Reed-Muller code sequences achieve capacity in the weak sense \cite{reeves,abbe2023reedmullercodesvanishingbiterror}, matching the error rate of \cite{abbe2023reedmullercodesvanishingbiterror} and improving the error rate of \cite{reeves}.  
\end{remark}
\section{Reed-Muller codes}
Reed-Muller codes are deterministic codes with a recursive structure. The general notation of Reed-Muller codes is $RM(m,r)$ with parameters $m$ and $r$. Here, $m$ controls the length of the codeword, and $r$ controls the code rate; $n=2^m$, $k={m\choose \le r}$, and $R={m\choose \le r}/2^m$ where $\binom{m}{\leq r}:=\sum_{i=0}^r \binom{m}{i}$. In brief, {\it the code is given by the evaluation vectors of polynomials of bounded degree $r$ on $m$ Boolean variables.}
Here we give a recursive construction of the code.
First, take the $0^n$-codeword as we are building a linear code. Then, as a first column, take $1^n$ as the vector with maximal Hamming distance from $0^n$. As a second column, take a vector that's the furthest away from $0^n$ and $1^n$, such as $(01)^{n/2}$. Complete this to $m+1$ columns to build a code of minimum distance $\frac{n}{2}$ (this is the first order RM code, also called the augmented Hadamard code). This already allows us  to visualize $G_{full}$ for $RM(1,\cdot)$-codes: $G^{(1)}_{full}=\left(\begin{matrix}
  1 & 0 \\ 1 & 1
\end{matrix}\right)$,\,\, $RM(1,0)$ is generated by the first column and $RM(1,1)$ by the first two columns.

The idea of higher order RM codes is to iterate that construction on the support of the previously generated vectors, i.e., the $m+2$-nd column is at distance $n/4$ from all other columns, repeating the pattern $(0 0 0 1)^{n/4}$, and completing these until distance $n/4$ is saturated, which adds $\binom{m}{2}$ more columns. Next one increments $i$, adding $\binom{m}{i}$ columns while only halving the minimum distance.

For $RM(3,\cdot)$, $G_{full}$ is as follows:
$G^{(3)}_{full}=\left(\begin{array}{c:ccc:ccc:c}
  1 & 0 & 0 & 0 & 0 & 0 & 0 & 0  \\
  1 & 1 & 0 & 0 & 0 & 0 & 0 & 0 \\
  1 & 0 & 1 & 0 & 0 & 0 & 0 & 0 \\
  1 & 1 & 1 & 0 & 1 & 0 & 0 & 0 \\
  1 & 0 & 0 & 1 & 0 & 0 & 0 & 0 \\
  1 & 1 & 0 & 1 & 0 & 1 & 0 & 0 \\
  1 & 0 & 1 & 1 & 0 & 0 & 1 & 0 \\
  1 & 1 & 1 & 1 & 1 & 1 & 1 & 1 \\
\end{array}\right)$.

The definition of Reed-Muller codes requires some formal definitions.
\begin{definition}\label{lex}
    Let $S$ be a finite set. Define the following:
    \begin{itemize}
    \item $\F_2^S$ denotes the set of Boolean vectors indexed by the elements of $S$. 
    \item $\binom{S}{r}:=\{S' \subseteq S\mid |S'|=r\}$, $\binom{S}{\lessgtr r}:=\{S' \subseteq S\mid |S'|\lessgtr r\}$.
    \item $\Pc_m:= \F_2[x_1,x_2 \ldots x_m]/(x_i^2=x_i\text{ for }i \in [m])$. 
    \end{itemize}
    In addition, the following maps are introduced.
    \begin{itemize}
        \item $coef:\Pc_m \rightarrow \F_2^{2^{[m]}} - \text{maps Boolean polynomials to their sets of coefficients.}$
        \item $eval:\Pc_m \rightarrow \F_2^{\F_2^m} - \text{maps Boolean polynomials to their value sets.}$
    \end{itemize}
     \end{definition}
\begin{definition}
    For $S_1 \subseteq S_2$, we define the operator $\mathrm{incl}_{S_1,S_2}:\F_2^{S_1} \rightarrow \F_2^{S_2}$  by $(x_s)_{s\in S_1}\mapsto (y_s)_{s\in S_2}$, where $y_s=\begin{cases}x_s,\;s\in S_1\\0,\;s\in S_2\setminus S_1\end{cases}$.

    For $S_1 \subseteq S_2$, define the projection operator $\mathrm{proj}_{S_2, S_1}:\F_2^{S_2} \rightarrow \F_2^{S_1}$ by \[(x_s)_{s \in S_2} \mapsto (x_s)_{s \in S_1}.\]
\end{definition}
Finally, the Reed-Muller code $RM(m,r)$ is defined as follows.
\begin{definition}
     Let $m,\,r \in \mathbb{Z}$ satisfy $0 \leq r \leq m$. Define $x_A= \prod_{i \in A}x_i\,\, \text{ for all } A \subseteq [m]$, $x=(x_1,x_2 \ldots x_m) \in \mathbb{F}_2^m$. $f_{RM}:\mathbb{F}_2^{2^{[m]}} \rightarrow \mathbb{F}_2^{\F_2^m}$ is the encoder defined by 
     \[f_{RM}(u)=eval\left(\sum_{S \in 2^{[m]}} u_S x_S\right).\]
     $Im\left(f_{RM}\circ\mathrm{incl}_{\binom{[m]}{\leq r},2^{[m]}}\right)=f_{RM}\left(\mathbb{F}_2^{\binom{[m]}{\leq r}}\right)=:RM(m,r)$ is called a Reed-Muller code.
    \end{definition}
\begin{remark}[RM code capacity-achieving parameter r.] Note that the rate of $RM(m,r_m)$ is $\frac{\binom{m}{\leq r_m}}{2^m}$, so to attain limit $1-H(\delta)$, $r_m$ must be equal to $\frac{m}{2}+C\sqrt{m}+o_m(\sqrt{m})$ for a specific constant $C \in \mathbb{R}$. The gap between the channel capacity and the code's rate is allowed to be positive, which we exploit by allowing $r$ to be equal to $\frac{m}{2}+(C-\epsilon)\sqrt{m}+o_m(\sqrt{m})$ for an arbitrarily small $\epsilon>0$.
\end{remark}
We define the following.
\begin{align*}
    &U\sim Unif(\mathbb{F}_2^{2^{[m]}})\text{ - a coefficient set};\,\,X=f_{RM}(U)\text{ - a codeword of full dimension};\\
    &Y=X+Z\text{ - the observed noisy codeword};\,\,U_r = \mathrm{proj}_{2^{[m]}, \binom{[m]}{r}}(U);\\ &U_{\lessgtr r}=\mathrm{proj}_{2^{[m]}, \binom{[m]}{\lessgtr r}}(U).
\end{align*}
In this paper, we analyze entropies of Reed-Muller message layers. 
\begin{definition}
    Let $A$ be a random variable taking values on a finite set $\A$. Define 
    \[H(A):=- \sum_{a \in \A} \pp(A=a) \log_2 \pp(A=a).\]
    $H(A)$ is called the entropy of $A$.
\end{definition}
The entropy is a convenient measure of randomness to exploit chain rule properties of measuring the randomness on multiple variables, as will be extensively used here for the RM codeword components. 

Starting from this section, we use the following notation:
\[H(A,\,B)=H((A,\,B)), H(A,\,B\mid C,\,D)=H((A,\,B)\mid(C,\,D))\]
for $A,\,B,\,C,\,D$ valued in finite sets.

For a pair of random variables $(A,B)$ valued in $\mathcal{A} \times \mathcal{B}$, the conditional entropy is defined by $H(A\mid B):=H(A,B)-H(B)$. Conditional entropy has the following important property: $H(A\mid B) \leq H(A)$. The expected error probability of the maximum likelihood decoder when guessing $A$ given the observation of $B$ is bounded by $H(A\mid B)$.

\begin{lemma}\label{entbound}
    Consider a random variable $A$ taking values in the set $\mathcal{A}$. Define 
    \[err(A):=1-\max_{a \in \mathcal{A}}\pp_A(a).\]
    Additionally define
    \[err(A\mid B):=\E err(A\mid B=b).\]
    Then, $err(A\mid B)\leq H(A\mid B)$.
\end{lemma}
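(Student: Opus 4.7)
The plan is to reduce the conditional statement to its unconditional counterpart and then prove the unconditional inequality $\mathrm{err}(D) \le H(D)$ for an arbitrary random variable $D$ on a finite alphabet. For the reduction, I would observe that both sides are, by definition, averages over $y \sim p_Y$: the hypothesis gives $\mathrm{err}(C|Y) = \E_y\, \mathrm{err}(C|Y=y)$, and the averaging property of conditional entropy stated earlier in the preliminaries gives $H(C|Y) = \E_y\, H(C|Y=y)$. So it suffices to show $\mathrm{err}(D) \le H(D)$ for every finitely supported $D$, then take an expectation against $p_Y$.

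For the pointwise step, set $p^\star := \max_{c} \pp_D(c)$, so that $\mathrm{err}(D) = 1 - p^\star$. The key elementary ingredient I would use is the inequality
\[
\log_2(1/p) \;\ge\; 1 - p \qquad \text{for all } p \in (0,1],
\]
which follows from the standard bound $\ln(1/p) \ge 1 - p$ together with $1/\ln 2 > 1$ (with the convention $0 \log_2(1/0) = 0$ handling any zero atoms). Applying this term-by-term in the entropy sum yields
\[
H(D) \;=\; \sum_c p_c \log_2(1/p_c) \;\ge\; \sum_c p_c(1-p_c) \;=\; 1 - \sum_c p_c^2 \;\ge\; 1 - p^\star\sum_c p_c \;=\; 1 - p^\star \;=\; \mathrm{err}(D),
\]
where the last inequality uses $p_c \le p^\star$ inside the quadratic sum. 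Averaging this against $p_Y(y)$ on both sides delivers the lemma.

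There is no real obstacle here; the entire argument is a short pointwise inequality chain. The only mild care-points are the convention at $p_c = 0$ and the passage from the natural-log to the base-$2$ version of $\log(1/p) \ge 1-p$, both of which are routine.
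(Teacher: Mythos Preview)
Your argument is correct: the reduction to the unconditional case via $H(C|Y)=\E_y H(C|Y=y)$ is exactly the averaging property recorded in the preliminaries, and the pointwise chain
\[
H(D)\;=\;\sum_c p_c\log_2\frac{1}{p_c}\;\ge\;\sum_c p_c(1-p_c)\;=\;1-\sum_c p_c^2\;\ge\;1-p^\star
\]
is valid, with the first inequality coming from $\log_2(1/p)\ge\ln(1/p)\ge 1-p$ and the second from $p_c\le p^\star$.

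There is nothing to compare against here: the paper states this lemma in its ``Useful theorems'' subsection without proof, treating it as a standard fact. Your write-up supplies a clean self-contained justification. (As a minor remark, one can shorten the chain slightly by bounding $\log_2(1/p_c)\ge\log_2(1/p^\star)\ge 1-p^\star$ directly for each $c$, which avoids passing through $\sum_c p_c^2$; but your route is equally valid and arguably more informative since it exhibits the intermediate collision-probability bound.)
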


In our coding setting of RM codes, we are interested in obtaining a small upper bound on the conditional entropy
\[H(U_{\leq r}^{(m)}\mid Y^{(m)}, U_{>r}^{(m)}).\]
This conditional entropy measures the following: we are transmitting a polynomial (of unbounded degree) with random coefficients $U^{(m)}$, and then look at the conditional entropy that the components $U_{\leq r}^{(m)}$ have (which correspond to the $RM(m,r)$ codeword), given the received noisy codeword and the complement components $U_{>r}^{(m)}$ of degree $> r$ that are made available to the decoder. The latter components are made available to the decoder because the RM code freezes these components to 0, and for symmetric channels, this is equivalent w.l.o.g.\ to giving access to $U_{>r}^{(m)}$ to the decoder. Consequently, we aim to decode $U_{\leq r}^{(m)}$ observing $Y^{(m)}, U_{>r}^{(m)}$.

\section{Main result}
In this paper, we provide a polarization theory for RM codes that implies an alternative proof of the weak capacity result. I.e., we show that a {\it monotone entropy extraction phenomenon} takes place for RM codes which implies that RM codes have a vanishing local error at any rate below capacity. The general idea is to show that RM codes will extract the randomness of the code, measuring the latter by using  the Shannon entropy of sequential layers in the code. This approach is similar to the approach developed in \cite{AY18} with the polarization of RM code.

Our proof then relies crucially on the recent Polynomial Freiman-Ruzsa or Marton's conjecture's proof by \cite{gowers2023conjecturemarton}.
The result\footnote{A variant formulation states that for a random variable $X$ on $\mathbb{F}_2^d$ with $d(X,X) \le \log K$, there exists a uniform random variable on a subgroup $\Gc$ of $\mathbb{F}_2^d$ such that $d(X,U_\Gc) \le C \log K$ for a constant $C$ (where $C=6$ is achieved).} of Gowers et al. shows that if $H(X+X')-H(X)$ is small for independent binary vectors $X,X'$ , then $X$ is close to the uniform distribution on a subspace of $\mathbb{F}_2^m$.
This paper shows that this additive combinatorics result is intimately related to the capacity property of Reed-Muller codes when tracking the sequential entropies of RM codes. Namely, how the sequential entropies of the layers of monomials of increasing degree behave as the code dimension grows. The fact that $H(U+U')$ will be bounded away from $H(U)$ due to the entropic Freiman-Ruzsa Theorem will let us show that the subsequent\footnote{One has to actually work with conditional entropies rather than such direct entropies, which also requires us to slightly generalize the result of  \cite{gowers2023conjecturemarton}.} entropies decay as the degree increases, which implies the 'monotone' entropy extraction of the code, allowing for the desired error bounds.

This approach allows us to prove the following result:
\begin{theorem}\label{final_res}
    Consider the binary symmetric channel with error parameter $\delta \in [0,\frac{1}{2})$. Assume that the parameters $m$ and $r_m$ satisfy the relation $\limsup_{m \rightarrow +\infty}\frac{\binom{m}{\le r_m}}{2^m}<1-H(\delta)$, where $0 \leq r_m \leq m$. 
    \begin{enumerate}
    \item{\bf(Layer polarization inequality)}\label{thm: final_res_p2} Let $a_{m,r}=H(U_{\leq r}^{(m)}\mid Y^{(m)}, U_{>r}^{(m)}),\, f_{m,r}=a_{m,r}-a_{m,r-1}=H(U_{r}^{(m)}\mid Y^{(m)}, U_{>r}^{(m)})$. Then, the following block polarization holds:
    \[a_{m+1,r+1}\leq a_{m,r+1}+a_{m,r} - \frac{1}{140}\min\left(f_{m,r+1},\,\binom{m}{r}-f_{m,r+1}\right). \]
    \item{\bf(Layer entropy bound)} Suppose that  $\limsup_{m \rightarrow \infty}\frac{\binom{m}{\leq r_m}}{2^m}= (1-\varepsilon)(1-H(\delta))$ for some $\varepsilon>0$. Then there exists $c_{\varepsilon}>0$ such that $a_{m,r_m}\le 2^m 2^{-2c_{\varepsilon}\sqrt{m}}$ for all sufficiently large~$m$.
    \item{\bf(Weak capacity)} The bit-error probability of the Reed-Muller code sequence $\{RM(m,r_m)\}_{m \in \mN}$ satisfies:
    \[P_{\mathrm{bit}}=2^{-\Omega_m(\sqrt{m})}.\]
    \end{enumerate}
\end{theorem}

\subsection{Additive combinatorics result: orbit localization lemma}\label{sec:add_comb}
The paper establishes a striking connection between the recent proof of the Freiman-Ruzsa conjecture \cite{gowers2023conjecturemarton} and the block entropy polarization. We report here a lemma used in this part, of possible independent interest in additive combinatorics. 

\begin{lemma}
Let $n\in \mN$, $\mathbb{F}$ be a finite field, $\mathcal{T}$ be a set of linear transformations on $\mathbb{F}^n$, and $\mathcal{W}$ be a probability distribution over subspaces of $\mathbb{F}^n$ such that for every $T \in \mathcal{T}$ and every subspace $\Gc_0$ of $\mathbb{F}^n$, the following equality is true: \[\mathbb{P}_{\Gc\sim\mathcal{W}}[\Gc=\Gc_0]=\mathbb{P}_{\Gc\sim\mathcal{W}}[\Gc=T\Gc_0].\]
        Then there must exist a subspace $\Gc^\star$ of $\mathbb{F}^n$ such that $T\Gc^\star=\Gc^\star$ for all $T\in\mathcal{T}$ and
        \[\mathbb{E}_{\Gc\sim\mathcal{W}}[\dist(\Gc,\Gc^\star)]\le\frac{9}{2}\mathbb{E}_{\Gc,\Gc'\sim\mathbb{\mathcal{W}}}[\dist(\Gc,\Gc')]\]
    \end{lemma}
The small orbit localization lemma allows us to analyze distances arising from invariant probability distributions over subspaces by reducing the problem to the study of invariant subspaces of $\F^n$. The key point is that the family of invariant subspaces $\Gc^*$ form a constrained set, which provides more structure than an arbitrary subspace $\Gc'$. Thus, $\dist(\Gc, \Gc^\star)$ admits stronger control and better bounds than distances between two independently drawn subspaces $\dist(\Gc, \Gc')$. For instance, as we prove later in the paper, in the space of homogeneous $m$-variable Boolean functions of degree $r$ on the field $\F=\F_2$, the only subspaces invariant under all affine linear transformations of $\mathbb{F}_2^m$ are the trivial subspace $\{0\}$ and the entire space. This observation provides a lower bound of $\dist(\Gc, \Gc^\star) \geq \min \{\dim(\Gc), \binom{m}{r}-\dim(\Gc)\}.$

Moreover, the identity $\dist(\Gc, \Gc')=2d(U_\Gc, U_{\Gc'})$ connects distances between subspaces to distances between the associated Boolean random variables with invariant distributions. This relationship enables us to translate structural results about invariant subspaces into quantitative bounds on invariant probability distributions, using tools such as the Freiman–Ruzsa theorem.

\section{Related literature}
It has long been conjectured that RM codes achieve Shannon capacity on symmetric channels,
with its first appearance present shortly after the RM code's definition in the 60s; see \cite{Kudekar17}.
Additional activity supporting the claim took place in the 90s, in particular in 1993 with a talk by
Shu Lin entitled `RM Codes are Not So Bad' \cite{Lin93}. A 1993 paper by Dumer and
Farrell also contains a discussion on the matter \cite{Dumer93}, as does the 1997 paper of Costello and Forney on the `road to channel capacity' \cite{Costello07}. The activity then increased with the emergence of polar codes in 2008 \cite{Arikan09}.
Due to the broad  relevance\footnote{RM codes on binary or non-binary fields have been used for instance in cryptography \cite{Shamir79,BF90,Gasarch04,Yekhanin12}, pseudo-random generators and randomness extractors \cite{DBLP:journals/jcss/Ta-ShmaZS06,bogdanov-viola}, hardness amplification, program testing and interactive/probabilistic proof systems \cite{BFL90,Sha92,ALMSS98}, circuit lower bounds \cite{Razborov}, hardness of approximation \cite{barak2012making,DBLP:conf/focs/BhattacharyyaKSSZ10}, low-degree testing \cite{DBLP:journals/tit/AlonKKLR05,DBLP:journals/siamcomp/KaufmanR06,DBLP:journals/rsa/JutlaPRZ09,DBLP:conf/focs/BhattacharyyaKSSZ10,DBLP:journals/siamcomp/HaramatySS13}, private information retrieval \cite{DBLP:journals/jacm/ChorKGS98,DBLP:journals/jacm/DvirG16,DBLP:journals/jacm/ChorKGS98,BEIMEL2005213,DBLP:conf/focs/BeimelIKR02}, and compressed sensing  \cite{Calderbank2010reed,Calderbank10,Barg15}.} of RM codes in computer science, electrical engineering and mathematics, the activity scattered in a wide line of works \cite{dumer1,dumer2,dumer3,Carlet05,hell,arikan-RM,Arikan09,Arikan2010survey,Kaufman12, Abbe15stoc,Abbe15,Kudekar17,Mondelli14,Saptharishi17,AY18,YA18,Sberlo18,comparingBitMAP,Samorod18,Hazla21,HSS,Lian20,Fathollahi21,ASY21,reeves,Geiselhart21,BhandariHS022,rao2022list}; see also \cite{RM_fnt}.

The approaches varied throughout the last decades:
\begin{itemize}
  \item {\it Weight enumerator:} this approach \cite{Abbe15stoc,Abbe15,Sberlo18,Samorod18,HSS} bounds the global error $P_\mathrm{glo}$ with a bound that handles codewords with the same Hamming weight together. This requires estimating the number of codewords with a given Hamming weight in the code, i.e., the weight enumerator: $A_{m,r}(\alpha)=|\{i \in [2^{mR}]\mid  w_H(X_i) \le \alpha n \}|$, where $\alpha \in [0,1]$ and $w_H$ takes the Hamming weight of its input. The weight enumerator of RM codes has long been studied, in relation to the conjecture and for independent interests, starting with the work of Sloane-Berlekamp for $r=2$ \cite{sloane-RM} and continuing with more recent key improvements based on \cite{Kaufman12} and \cite{Samorod18}.

  \item {\it Area theorem and sharp thresholds:} in this approach from  \cite{Kudekar17}, the local entropy $H(X_i\mid Y_{-i})$ is bounded. By the chain rule of the entropy (i.e., entropy conservation, also called the `area theorem'), if this quantity admits a threshold, it must be located at the capacity. In the case of the erasure channel, this quantity is a monotone Boolean property of the erasure pattern, and thus, results about thresholds for monotone Boolean properties from Friedgut-Kalai \cite{friedgut1996every} apply to give the threshold. Moreover, sharper results about properties with transitive symmetries from  Bourgain-Kalai \cite{Bourgain97} apply to give a $o_n(1/n)$ local error bound, thus implying a vanishing global error from a union bound. The main limitation of this approach is that the monotonicity property is lost when considering  channels that are not specifically the erasure channel (i.e., errors break the monotonicity).

  In \cite{reeves}, this area theorem approach with more specific local error bounds exploiting the nested properties of RM codes is nonetheless used successfully to obtain a local error bound of $O_n(\log\log(n)/\sqrt{\log(n)})$; this gives the first proof of achieving capacity with a vanishing bit-error probability for symmetric channels. This takes place however at a rate too slow to provide useful bounds for the block/global error. Our paper also achieves only a vanishing bit-error probability, with  however an exponential improvement of the rate, namely $2^{-\Omega_n(\sqrt{\log(n)})}$ compared to $O_n(\log\log(n)/\sqrt{\log(n)})$ in  \cite{reeves}. 
  With an additional factor of $\log \log(n)$ in the latter exponent one could use the bit to block error argument from \cite{Abbe23} to obtain a vanishing block error probability; this relates to the modified Freiman-Ruzsa inequality conjecture of Section \ref{new_conj}.

  \item {\it Recursive  methods:} the third approach, related to our paper, exploits the recursive and self-similar structure of RM codes. In particular, RM codes are closely related to polar codes \cite{Arikan09}, with the latter benefiting from martingale arguments in their analysis of the conditional entropies that facilitate the establishment of threshold properties. RM codes have a different recursive structure than polar codes; however, \cite{AY18,Hazla21} show that martingale arguments can still be used for RM codes to show a threshold property, but this requires potentially modifying the RM codes. This prior work focuses on the row by row conditional entropies, establishing the polarization phenomena but obtaining only a partial monotonicity property insufficient to imply that the entropy concentrates on the high-degree rows, i.e., that the original RM codes achieve capacity. In our work, we focus on the layer by layer conditional entropies, which are more easily shown to be monotone. The entropic Freiman-Ruzsa theorem then allows us to reach the polarization phenomenon at the layer scale, which, together with monotonicity, gives the weak capacity result. Self-similar structures of RM codes were also used in \cite{dumer1,dumer2,dumer3,YA18}, but with limited analysis for the capacity conjecture.
  \item {\it Boosting on flower set systems:} Finally the recent paper \cite{Abbe23} settled the conjecture about RM codes achieving a vanishing block/global error down to capacity. The proof relies on obtaining boosted error bounds by exploiting flower set systems of subcodes, i.e., combining large numbers of subcodes' decodings to improve the global decoding. This is a different type of threshold effect that is less focused on being `successive' in the degree of the RM code polynomials, but exploiting more the combination of weakly dependent subcode decodings.
\end{itemize}
Our entropy extraction proof of a vanishing bit-error probability pursues approach 2 related to \cite{AY18}, using successive entropies of RM codes and showing a polarization/threshold phenomenon of the successive layer entropies. The key ingredients to complete this program are: (1) using the Freiman-Ruzsa theorem\cite{gowers2023conjecturemarton} to show that if $H(U^{(m)}_{r}+U'^{(m)}_{r}\mid U^{(m)}_{>r},U'^{(m)}_{>r},Y^{(m)},Y^{'(m)})\approx H(U^{(m)}_{r}\mid U^{(m)}_{>r},Y^{(m)})$ then the probability distribution of $U^{(m)}_{r}\mid U^{(m)}_{>r}, Y^{(m)}$ must be approximately a uniform distribution on a subspace of $\mathbb{F}_2^{\binom{[m]}{r}}$; (2) showing that every subspace of $\mathbb{F}_2^{m\choose r}$ that even approximately satisfies the appropriate symmetries is approximately either the entire space or $\{0\}$; (3) using the previous results to show that the entropies polarize to one of the two extremal values; (4) using the resulting entropy bounds and a list decoding argument to show this in fact gives vanishing bit-error probability.
\section{Set-up for the proof of Theorem 3.1}
\subsection{Compact notation}
The entropy $H(U_{\leq r}^{(m)}\mid Y^{(m)}, U_{>r}^{(m)})$ can be rewritten more compactly. Let $m \in \mN,\,r \in \mZ$ satisfy $0\leq r \leq m$. Let $W:=f^{-1}(Z)$, where $Z\sim Ber(\delta)^{\F_2^{m}}$ is the noise vector, $f(\cdot)=f_{RM}(\cdot)$, $U \sim Unif(\mathbb{F}_2^{2^{[m]}})$. Let $W_r=\mathrm{proj}_{2^{[m]},\binom{[m]}{r}}(W),\, W_{\lessgtr r}=\mathrm{proj}_{2^{[m]},\binom{[m]}{\lessgtr r}}(W)$. The following chain is true:

\begin{equation}\label{simp_eq}
\begin{split}
H(U_r\mid Y,U_{>r})&=H((f^{-1}(Y)+W)_r\mid Y,U_{>r})=H((f^{-1}(Y))_r+W_r\mid Y,U_{>r})\\
&=H(W_r\mid Y,U_{>r})=H(W_r\mid U+W,U_{>r})\\
&=H(W_r\mid W_{>r}, U_{>r}, (U+W)_{\leq r})=H(W_r\mid (U+W)_{\leq r},W_{>r})\\
&=H(W_r\mid W_{>r}).
\end{split}
\end{equation}

\begin{itemize}
    \item The first equality comes from $Y=f(U)+Z$, thus $f^{-1}(Y)=U+f^{-1}(Z)$ and $U=f^{-1}(Y)+f^{-1}(Z)=f^{-1}(Y)+W$.
    \item The third equality comes from removing the conditionally known information from entropy.
    \item The sixth equation comes from the independence of $U_{>r}$ from other random variables.
    \item The seventh equation comes from $U \sim Unif(\mathbb{F}_2^{2^{[m]}})$, which implies that $U+W$ and $W$ are independent.
\end{itemize}
Throughout the work, we equivalently rewrite the entropy of $H(U_r\mid U_{>r},Y)$ as $H(W_r\mid W_{>r})$ and $H(U_{\leq r}\mid U_{>r},Y)$ as $H(W_{\leq r}\mid W_{>r})$, simplifying the notation. Note that in cases where the parameter $m$ changes, we use an alternative notation $W^{(m)}_r$ to avoid confusion.
\subsection{Ruzsa distance and symmetries}
Throughout the work, we need to compare entropies of sums of random variables to their individual entropies. This comes up from the recurrent structure of RM codes (Plotkin construction), and the fact that we look at layers of RM codes of consecutive degree. More precisely, one has:
\begin{equation}\label{eq_almost_rec}
\begin{split}
  &H(W^{(m+1)}_{\leq r}\mid W^{(m+1)}_{>r})=H(W^{(m)}_{\leq r},W'^{(m)}_{\leq r-1}\mid W^{(m)}_{>r},W'^{(m)}_{>r},W^{(m)}_{r}+W'^{(m)}_{r})=\\
  &=2H(W^{(m)}_{\leq r}\mid W^{(m)}_{>r})-H(W^{(m)}_{r}+W'^{(m)}_{r}\mid W^{(m)}_{>r},W'^{(m)}_{>r}).
\end{split}
\end{equation}

In order to deal with entropies of type $H(X+X')-H(X)$, the notion of Ruzsa distance is used.

\begin{definition}\label{def:ruzsa}
Let $X,Y$ be random variables taking values in an abelian group $\mathrm{G}$, and $X', Y'$ be variables that have the same probability distributions as $X$ and $Y$ but are independent of each other. Define
    \[d(X,Y):=H(X'-Y')-\frac{1}{2}(H(X')+H(Y')).\]
    $d(X,Y)$ is called the Ruzsa distance of $X$ and $Y$.
\end{definition}

The Ruzsa distance is not formally a distance; $d(X,X)>0$ for any $X$ valued in $\mathbb{F}_2^d$ that is not equivalent to a uniform random variable on a coset of some subspace $\Hc \subseteq \mathbb{F}_2^d$. However, the Ruzsa distance satisfies the triangle inequality and is nonnegative.

\begin{property}
    Let $X, Y, Z$ be random variables taking values on an abelian group $(\mathrm{G},+)$. The following relation is true:
    \begin{enumerate}
        \item $d(X,Y)\geq\frac{1}{2}|H(X)-H(Y)|,$
        \item $d(X,Y)\leq d(X,Z)+d(Z,Y).$
    \end{enumerate}
\end{property}
We define the conditional Ruzsa distance in order to deal with conditional entropies.
\begin{definition}
     Let $X,Y$ be $\mathrm{G}$-valued random variables with $(\mathrm{G},+)$ being an abelian group, $A,B$ be random variables; and $(X',A')$ and $(Y',B')$ be copies of $(X,A)$ and $(Y,B)$ that are independent of each other. Define
    \[d(X\mid A,Y\mid B):=H(X'-Y'\mid A',B')-\frac{1}{2}(H(X'\mid A')+H(Y'\mid B')).\]
    $d(X\mid A, Y\mid B)$ is called the conditional Ruzsa distance of $X$ conditioned on $A$ and $Y$ conditioned on $B$.
\end{definition}
The object of interest is $d(W_r\mid W_{>r},\, W_r\mid W_{>r})$, as it appears in \ref{eq_almost_rec}. 
\subsection{Freiman-Ruzsa inequality for conditional Ruzsa distance}

\begin{theorem}[Entropic Freiman-Ruzsa Theorem \cite{gowers2023conjecturemarton}]
    Let $k \in \mN$. For any $\mathbb{F}_2^k$-valued random variables $X,Y$, 
    \[\exists \text{ subspace } \Gc \subseteq \mathbb{F}_2^k: d(X,U_\Gc)\leq 6d(X,Y)\,\,\]
where $U_{\Gc}$ is the uniform distribution on $\Gc$.
\end{theorem}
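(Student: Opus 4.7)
The plan is to follow the entropy-minimization strategy of Gowers, Green, Manners, and Tao. First I would reduce to the symmetric case: by the Ruzsa triangle inequality established in the Preliminaries, $d(X,X) \le 2\, d(X,Y)$, so it suffices to prove a statement of the form $d(X, U_\Gc) \le 3\, d(X,X)$ for some subspace $\Gc \subseteq \mathbb{F}_2^k$. This reduction is attractive because $d(X,X)$ is precisely the entropic ``doubling constant'' measuring how far $X$ is from being uniform on a coset, so the task becomes: produce an honest subspace whose uniform law realizes this closeness.

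Next I would set up an entropy-minimization problem designed to detect the obstruction. Define
\[\tau(Z_1, Z_2) := d(Z_1, Z_2) + \eta \bigl(d(Z_1, X) + d(Z_2, X)\bigr)\]
on pairs of $\mathbb{F}_2^k$-valued random variables, where $\eta > 0$ is a small universal parameter to be optimized at the end. The trivial choice $(X,X)$ gives $\tau = (1+2\eta)\, d(X,X)$, so any minimizer $(Z_1^\ast, Z_2^\ast)$ satisfies both $d(Z_1^\ast, Z_2^\ast) \le (1+2\eta)\, d(X,X)$ and $d(Z_i^\ast, X) = O(d(X,X))$. The heart of the argument is to show that such a minimizer must be additively rigid, in the sense that its law is concentrated on a coset of a subspace.

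Then I would exploit this rigidity variationally. By perturbing $(Z_1^\ast, Z_2^\ast)$ through independent copies and conditional fiberings, and invoking the entropic Pl\"unnecke-Ruzsa calculus, one forces certain conditional mutual informations at the minimizer to vanish (up to terms controlled by $d(X,X)$). This structural information is then fed into an entropic Balog-Szemer\'edi-Gowers step, which produces a large additive-energy piece inside the support of $Z_1^\ast$, and a Freiman-type extraction identifies a subspace $\Gc$ whose uniform distribution $U_\Gc$ captures most of the mass of $X$ with the claimed entropy deficit.

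The main obstacle is the endgame: quantitatively turning structural information about the minimizer into the explicit constant $6$. Each application of Pl\"unnecke-Ruzsa and the BSG-style extraction costs a multiplicative factor, and the final bound hinges on a delicate accounting exercise together with an optimal choice of $\eta$. I would not expect to match the constant $6$ without the refined quantitative bookkeeping of the GGMT proof; indeed, obtaining any absolute constant here is the content of the polynomial Freiman-Ruzsa conjecture, which resisted attack for decades prior to 2023 and whose resolution is what this paper takes as a black box.
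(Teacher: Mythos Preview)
The paper does not prove this theorem; it is quoted verbatim from \cite{gowers2023conjecturemarton} and used as a black box (the very next sentence after the statement begins ``The Entropic Freiman-Ruzsa Theorem does not allow us to directly work with the conditional entropy\ldots'' and proceeds to derive Corollary~\ref{CEFR} from it). You already recognize this in your final sentence, so there is no ``paper's own proof'' to compare your proposal against.

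That said, one inaccuracy in your sketch of the GGMT argument is worth flagging. The endgame does \emph{not} pass through an entropic Balog--Szemer\'edi--Gowers step followed by a Freiman-type extraction of a subspace. Rather, the variational analysis (fibering the minimizer through sums and conditioning) forces the minimizer $(Z_1^\ast, Z_2^\ast)$ to satisfy $d(Z_1^\ast, Z_2^\ast) = 0$ exactly, and a random variable with zero Ruzsa self-distance is already uniform on a coset of a subgroup; the subspace $\Gc$ is then read off directly, and the constant comes from the triangle inequality bounding $d(X, U_\Gc)$ via $d(X, Z_i^\ast)$. Your description of the $\tau$-functional and the reduction to the symmetric case is in the right spirit, but the BSG/Freiman route you outline is the \emph{pre}-GGMT heuristic, and following it would not yield a polynomial (let alone the constant $6$).
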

The Entropic Freiman-Ruzsa Theorem does not allow us to directly work with the conditional entropy. This is due to the fact that $\Gc$ depends on $X$, as such an averaging approach fails. However, the following corollary overcomes this limitation. Define $d(X\mid Y,Z):=H(X+Z\mid Y)-\frac{1}{2}(H(X\mid Y)+H(Z))=:d(Z,X\mid Y).$

\begin{corollary}[Conditional Entropic Freiman-Ruzsa Theorem]\label{CEFR}
     Let $k \in \mN$. For $\mathbb{F}_2^k$-valued random variables $X,\,Y$, arbitrarily valued random variables $A,\, B$, there exists a subspace $\Gc$ of $\mathbb{F}_2^k$ such that $d(Y\mid B,U_\Gc) \leq 7d(X\mid A,Y\mid B)$ where $U_\Gc$ is a uniform random variable on $\Gc$.
\end{corollary}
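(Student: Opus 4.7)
The plan is to prove this corollary by reducing to the unconditional Entropic Freiman-Ruzsa Theorem through a Markov-style selection of a good conditioning pair, combined with a triangle inequality for the conditional Ruzsa distance.

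First, I would establish the triangle inequality for conditional Ruzsa distance: for mutually independent triples $(X, A), (Y, B), (Z, C)$ with $X, Y, Z$ valued in $\mathbb{F}_2^k$,
\[
d(X|A, Z|C) \leq d(X|A, Y|B) + d(Y|B, Z|C).
\]
After cancelling $\tfrac{1}{2}H(X|A)$ and $\tfrac{1}{2}H(Z|C)$ on both sides, this is equivalent to $H(X-Z \mid A, C) \leq H(X-Y \mid A, B) + H(Y-Z \mid B, C) - H(Y \mid B)$, which follows by applying the unconditional Ruzsa entropic triangle pointwise to $(X|A=a, Y|B=b, Z|C=c)$ (which are independent for each fixed $(a,b,c)$) and averaging against the product distribution of $(A, B, C)$.

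Second, set $\delta := d(X|A, Y|B) = \mathbb{E}_{a,b}\, d(X|A=a, Y|B=b)$. Since $\delta = \mathbb{E}_{a}\, d(X|A=a, Y|B)$, by Markov there exists $a_0$ in the support of $A$ with $d(X|A=a_0, Y|B) \leq \delta$. For this $a_0$, since $d(X|A=a_0, Y|B) = \mathbb{E}_b\, d(X|A=a_0, Y|B=b) \leq \delta$, there further exists $b_0$ with $d(X|A=a_0, Y|B=b_0) \leq \delta$. I then apply the unconditional Entropic Freiman-Ruzsa Theorem to the two (unconditional) distributions $X|A=a_0$ and $Y|B=b_0$ on $\mathbb{F}_2^k$, obtaining a subspace $\mathcal{G} \subseteq \mathbb{F}_2^k$ with
\[
d(X|A=a_0, U_\mathcal{G}) \leq 6\, d(X|A=a_0, Y|B=b_0) \leq 6\delta.
\]

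Finally, I invoke the conditional triangle inequality with middle argument $X|A=a_0$ (both $X|A=a_0$ and $U_\mathcal{G}$ viewed with trivial side information):
\[
d(Y|B, U_\mathcal{G}) \leq d(Y|B, X|A=a_0) + d(X|A=a_0, U_\mathcal{G}) \leq \delta + 6\delta = 7\, d(X|A, Y|B),
\]
where $d(Y|B, X|A=a_0) = d(X|A=a_0, Y|B) \leq \delta$ by symmetry of the Ruzsa distance and the choice of $a_0$. The main obstacle is the clean establishment of the conditional triangle inequality in the mixed form where some side variables are trivial; once it is in place, the constant $7$ emerges exactly as $1 + 6$ with no slack in the chain of inequalities, matching the claim.
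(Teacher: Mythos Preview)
Your proof is correct and follows essentially the same approach as the paper: select a good conditioning value $a_0$ for $A$ by Markov, obtain a subspace $\Gc$ with $d(X|A=a_0,U_\Gc)\le 6\delta$ via the unconditional Freiman--Ruzsa theorem, and finish with the mixed conditional triangle inequality. The only cosmetic difference is that the paper secures the subspace by taking $\Gc^*=\argmin_\Gc d(X|A=a_0,U_\Gc)$ (so that $d(X|A=a_0,U_{\Gc^*})\le 6\,d(X|A=a_0,Y|B=b)$ holds for \emph{every} $b$ and hence after averaging), whereas you first pick a good $b_0$ and apply Freiman--Ruzsa once; your route is slightly more direct.
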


\begin{proof}
    Consider $d(X\mid A, Y\mid B)$. As $d(X\mid A, Y\mid B)=\E_w d(X\mid A=w, Y\mid B)$, there exists a $w$ such that $d(X\mid A=w, Y\mid B) \leq d(X\mid A, Y\mid B)$. 

    Take $\Gc^*=\text{argmin}_\Gc d(X\mid A=w, U_\Gc)$. Since there are finitely many subspaces of $\mathbb{F}_2^k$, $\Gc^*$ exists. As such, $d(X\mid A=w, U_{\Gc^*}) \leq 6d(X\mid A=w, Y\mid B=w')$ for any $w'$. Taking an expectation over $w'$, we get $d(X\mid A=w, U_{\Gc^*}) \leq 6d(X\mid A=w, Y\mid B)$. Finally, using the triangle inequality,
    \begin{align*}
    &d(Y\mid B, U_{\Gc^*}) \leq d(X\mid A=w, U_{\Gc^*})+d(X\mid A=w, Y\mid B)\\
    &\leq 7d(X\mid A=w, Y\mid B) \leq 7d(X\mid A, Y\mid B).
    \end{align*}
    Here, the triangle inequality is used as follows: $d(Y\mid B, U_{\Gc^*}) \leq d(X\mid A=w, U_{\Gc^*})+d(X\mid A=w, Y\mid B)$, which works due to an expectation $\E_{w'}$ put on both sides of $d(Y\mid B=w', U_{\Gc^*}) \leq d(X\mid A=w, U_{\Gc^*})+d(X\mid A=w, Y\mid B=w')$.   
\end{proof}

\section{Proof of Theorem 3.1}
This section is set up in the following way. The first subsection focuses on proving the first point of Theorem 3.1. In this subsection, the first subsubsection proves the affine invariance of Reed-Muller-associated Ruzsa distance $d(W_r \mid W_{>r}, U_\Gc)$. The second subsubsection provides a lower bound for $\max_{\pi}d(U_\Gc, \pi(U_\Gc))$ over affine transformations $\pi$, culminating in Corollary \ref{cor:dist_bound}. The third subsubsection establishes the recurrent layer entropy inequality using the Freiman-Ruzsa inequality and Corollary \ref{cor:dist_bound}. Finally, the fourth subsubsection establishes the layer polarization inequality based on the recurrent layer entropy inequality. 

The second subsection focuses on proving the second point of Theorem 3.1. It establishes the upper bound on layer entropy based on the layer polarization inequality. 

The third subsection builds the list of decoding candidates based on the value of the layer entropy and utilizes the list to bound the bit error probability. Note that every subsection here only uses the result of the previous subsection, and thus the theorem statements can be generalized to other codes with similar polarization properties.
    \subsection{Proof of Theorem 3.1 (1)}
    In this subsection, we will establish lower bounds on $d(W^{(m)}_{r}\mid W^{(m)}_{>r},W'^{(m)}_{r}\mid W'^{(m)}_{>r})$ so that we can prove and use equation \ref{eq_almost_rec} to bound $H(W^{(m+1)}_{\leq r}\mid W^{(m+1)}_{>r})$. Here, $m \in \mN, r \in \mZ$ satisfy $0 \leq r \leq m$. Our first step toward doing that will be to use the Freiman-Ruzsa Theorem to argue that if this distance is small then $W^{(m)}_{r}\mid W^{(m)}_{>r}$ must be close to a uniform distribution on a subspace of $\mathbb{F}_2^{{[m]\choose r}}$.

    Note that by the conditional entropic Freiman-Ruzsa theorem,
    \[\exists \text{ subspace } \Gc \subseteq \mathbb{F}_2^{\binom{[m]}{r}}: d(U_\Gc, W_r\mid W_{>r}) \leq 7d(W_r\mid W_{>r}, W'_r\mid W'_{>r}).\]
    First, we prove the permutation invariance of $d(U_\Gc, W_r\mid W_{>r})$. Next, we study the properties of the invariance group. Finally, we use these properties to derive the recurrence bound.
    \subsubsection{Permutation invariance}\label{Sec5.2.1}
    The Ruzsa distance is a useful notion to exploit the symmetrical structure of Reed-Muller codes. Note that $d(X,Y)=d(\pi(X),\pi(Y))$ when $\pi:\mathbb{F}_2^d\rightarrow\mathbb{F}_2^d$ is an isomorphism. Reed-Muller codes are symmetric with respect to a large family of symmetries, called "affine transformations". 
\begin{definition}
     Let $m \in \mathbb{N}$. Let $(\overline{x}, u_1) \in \mathbb{F}_2^{\F_2^m} \times \mathbb{F}_2^{2^{[m]}}$. Also, let $g_{A}(x)=Ax\,\, \text{ for all } A \in GL_m(\mathbb{F}_2),\,\,x \in \mathbb{F}_2^m$. Let $h=coef^{-1}(u_1),\, h^\star=eval^{-1}(\overline{x})$. Then, we define the following:
    \[g_{A}^{coef}(u_1):=coef(h(Ax)),\,\, g_{A}^{eval}(\overline{x}):=eval(h^\star(Ax)).\]
\end{definition}
\begin{remark}
    Let $m \in \mathbb{N},\, r \in \mZ$ satisfy $0 \leq r \leq m$ and $A \in GL_m(\mathbb{F}_2),\, \overline{x}\in \mathbb{F}_2^{\F_2^m}$. If $\overline{x} \in RM(m,r)$, then $g_{A}^{eval}(\overline{x}) \in RM(m,r)$.
\end{remark}
In order to exploit properties of $g^{coef}_{A}$, we define the following operator.

This notion of symmetry has some nice properties which we exploit. 
\begin{property}\label{prop:automorph}    
    Let $m\in \mN,\, r \in \mZ$ satisfy $0\leq r \leq m$. $g_{A}^{coef}$ and $g_{A}^{eval}$ are isomorphisms of the vector spaces $\mathbb{F}_2^{2^{[m]}}\rightarrow \mathbb{F}_2^{2^{[m]}}$ and $\mathbb{F}_2^{\F_2^m}\rightarrow \mathbb{F}_2^{\F_2^m}$. Also, $g^{coef}_{A,\leq r}=\mathrm{proj}_{\F_2^{2^{[m]}} \rightarrow \F_2^{\binom{m}{\leq r}}}\circ g_{A}^{coef}\circ \mathrm{incl}_{\F_2^{\binom{m}{\leq r}}\rightarrow \F_2^{2^{[m]}}}$ is an invertible linear map and $g_{A}^{eval}$ is a permutation. 
\end{property}
\begin{corollary}\label{corr:automorph}
Let $m \in \mN,\, r \in \mathbb{Z}$ satisfy $0 \leq r \leq m$. For all $A \in GL_m(\F_2)$, the following map $\mathrm{proj}_{\F_2^{2^{[m]}} \rightarrow \F_2^{\binom{m}{\geq r+1}}}\circ g_{A}^{coef}=:\pi_2:\F_2^{2^{[m]}} \rightarrow \F_2^{\binom{[m]}{\geq r+1}}$ is a surjective linear map satisfying $\pi_2=\pi_2\circ\mathrm{incl}_{\F_2^{\binom{m}{\geq r+1}} \rightarrow \F_2^{2^{[m]}}} \circ \mathrm{proj}_{\F_2^{2^{[m]}} \rightarrow \F_2^{\binom{m}{\geq r+1}}}$. Moreover, $g^{coef}_{A,r}:=\mathrm{proj}_{\F_2^{2^{[m]}} \rightarrow \F_2^{\binom{m}{r}}}\circ g_{A}^{coef}\circ \mathrm{incl}_{\F_2^{\binom{m}{r}}\rightarrow \F_2^{2^{[m]}}}$ is an invertible linear map.
\end{corollary}
\begin{definition}
    Let $m \in \mN,\, r \in \mZ$ satisfy $0 \leq r \leq m$. Let $\pi$ be an isomorphism on $\mathbb{F}_2^{\binom{[m]}{r}} \rightarrow \mathbb{F}_2^{\binom{[m]}{r}}$, and $\Gc$ be a subspace of $\F_2^{\binom{[m]}{r}}$. Define $\pi(\Gc):=\{\pi(g)\mid  g \in \Gc\}.$
\end{definition}
Note: $\pi(U_\Gc)=U_{\pi(\Gc)}$.\\

\begin{lemma}\label{perm}
    Let $m \in \mN,\, r \in \mZ$ satisfy $0 \leq r \leq m$ and let $\Gc$ be a subspace of $\F_2^{\binom{[m]}{r}}$. For any $A \in GL_m(\F_2)$, 
    $d(U_\Gc,W_r\mid W_{>r})=d(g^{coef}_{A,r}(U_\Gc),W_r\mid W_{>r})$ .
    
\end{lemma}

    \begin{proof}
    Note that \scalebox{0.94}{$d(U_\Gc,W_r\mid W_{>r})=H(U_\Gc+W_r \mid W_{>r})-\frac{1}{2} \left(H(U_\Gc)+H(W_r \mid W_{>r}) \right)$}, \scalebox{0.91}{$d(g^{coef}_{A,r}(U_\Gc),W_r\mid W_{>r})=H(g^{coef}_{A,r}(U_\Gc)+W_r \mid W_{>r})-\frac{1}{2} \left(H(g^{coef}_{A,r}(U_\Gc))+H(W_r \mid W_{>r}) \right)$}.\\ The equality $H(U_\Gc)=H(g^{coef}_{A,r}(U_\Gc))$ follows from $g^{coef}_{A,r}(\cdot)$ being a bijection, so we only need to prove $H(g^{coef}_{A,r}(U_\Gc)+W_r \mid W_{>r})=H(U_\Gc+W_r \mid W_{>r}).$
    
    Note the following:
    \begin{equation}\label{weighted_noise_invariance}
    \begin{split}
    &\pp(W=w)=\pp(Z=f(w))\\
    &=\pp(Z=g^{eval}_{A}(f(w)))=\pp(W=f^{-1}(g^{eval}_{A}(f(w))))=\pp(W=g^{coef}_{A}(w)).
    \end{split}
    \end{equation}
\begin{itemize}
    \item The first equality follows from the definition of $W=f^{-1}(Z)$.
    \item The second equality follows from the fact that $g_{A}^{eval}$ is a permutation and thus does not change weight. $\pp(Z=w)$ is a function of Hamming weight of $w$, as such $\pp(Z=w)$ is permutation-invariant.
    \item The third equality follows from $W=f^{-1}(Z)$.
    \item The fourth equality follows from the following argument. Let $coef(P(x))=w$, then $f(w)=eval(P(x))$. Consequently, 
    \begin{align*}
    g^{eval}_{A}(f(w))=g^{eval}_{A}(eval(P(x)))=eval(P(Ax)).
    \end{align*}
    Finally, due to $f^{-1}(eval(P(x)))=coef(P(x))$, we conclude that 
    \begin{align*}
        f^{-1}(g^{eval}_{A}(f(w)))=f^{-1}(eval(P(Ax)))=coef(P(Ax))=g^{coef}_{A}(w).
    \end{align*}
\end{itemize} 

For each $A \in GL_m(\mathbb{F}_2)$ the following linear maps are defined:
\begin{align*}
    &\mathrm{proj}_{\F_2^{2^{[m]}} \rightarrow\F_2^{\binom{[m]}{\leq r}}}\circ g_{A}^{coef}\circ \mathrm{incl}_{\mathbb{F}_2^{\binom{[m]}{> r}} \rightarrow \F_2^{2^{[m]}}}=:\pi_{0, >r}:\mathbb{F}_2^{\binom{[m]}{> r}}\rightarrow \mathbb{F}_2^{\binom{[m]}{\leq r}},\\
    &\mathrm{proj}_{\F_2^{2^{[m]}} \rightarrow \mathbb{F}_2^{\binom{[m]}{r}}}\circ g_{A}^{coef}\circ \mathrm{incl}_{\mathbb{F}_2^{\binom{[m]}{>r}} \rightarrow \F_2^{2^{[m]}}}=:\pi_{>r}:\mathbb{F}_2^{\binom{[m]}{>r}}\rightarrow \mathbb{F}_2^{\binom{[m]}{r}},\\
    &\mathrm{proj}_{\F_2^{2^{[m]}} \rightarrow \mathbb{F}_2^{\binom{[m]}{>r}}}\circ g_{A}^{coef}\circ \mathrm{incl}_{\mathbb{F}_2^{\binom{[m]}{>r}} \rightarrow \F_2^{2^{[m]}}}=:g^{coef}_{A,>r}:\mathbb{F}_2^{\binom{[m]}{>r}}\rightarrow \mathbb{F}_2^{\binom{[m]}{>r}}.
\end{align*}
such that for every $W \in \mathbb{F}_2^{2^{[m]}}$ and $W':=g_{A}^{coef}(W)$, the following conditions hold:
\begin{align*}
    &W'_{\le r}=g^{coef}_{A,\leq r}(W_{\leq r})+\pi_{0,>r}(W_{>r}),\\
    &W'_r=g^{coef}_{A,r}(W_r)+\pi_{>r}(W_{>r}),\\
    &W'_{>r}=g^{coef}_{A,>r}(W_{>r}).
\end{align*}
The second and third relations do not depend on $W_{<r}$ and $W_{\leq r}$ respectively due to Proposition \ref{prop:automorph} and Corollary \ref{corr:automorph}.

By Corollary \ref{corr:automorph} , $g^{coef}_{A,>r}$ is an isomorphism, and the same corollary implies that $g^{coef}_{A,r}$ is an isomorphism as well.  By Property \ref{prop:automorph}, $g^{coef}_{A,\leq r}$ is an isomorphism. Note the following properties:
\begin{align*}
    &\pp(W_{>r}=w_{>r})=\sum_{w_{\leq r}} \pp(W_{\leq r}=w_{\leq r}, W_{>r}=w_{>r})=\sum_{w_{\leq r}} \pp(W=g^{coef}_{A}(w_{\leq r},w_{>r}))\\
    &=\sum_{w_{\leq r}} \pp(W_{\leq r}=g^{coef}_{A,\leq r}(w_{\leq r})+\pi_{0,>r}(w_{>r}), W_{>r}=g^{coef}_{A,>r}(w_{>r}))\\
    &=\pp(W_{>r}=g^{coef}_{A,>r}(w_{>r}))
\end{align*}
The second equality follows from \eqref{weighted_noise_invariance} and the last equality follows from the bijectivity of $g^{coef}_{A,\leq r}$, specifically from the fact that $g^{coef}_{A,\leq r}(w_{\leq r})$ attains all the values in $\mathbb{F}_2^{\binom{[m]}{\leq r}}$. Moreover, one can similarly show the following relation:
\[\pp(W_r=w_r\mid W_{>r}=w_{>r})=\pp(W_r=g^{coef}_{A,r}(w_r)+\pi_{>r}(w_{>r})\mid W_{>r}=g^{coef}_{A,>r}(w_{>r})).\]
    We proceed as follows:
    \begin{align*}
    &-H(U_\Gc+W_r\mid W_{>r})\\    &=\sum_{w_{\Gc,r},w_{>r}}\pp(U_\Gc+W_r=w_{\Gc,r},W_{>r}=w_{>r})\log_2\pp(U_\Gc+W_r=w_{\Gc,r}\mid W_{>r}=w_{>r}).
    \end{align*}
    We transform the probability term in the sum as follows:
    \begin{align*}   
    &\pp(U_\Gc+W_r=w_{G,r},W_{>r}=w_{>r})=\sum_{u_\Gc}\pp(U_\Gc=u_\Gc, W_r=w_{\Gc,r}+u_\Gc,W_{>r}=w_{>r})\\
    &=\sum_{u_\Gc}\Bigg(\pp(g^{coef}_{A,r}(U_\Gc)=g^{coef}_{A,r}(u_\Gc))\\
    &\cdot\pp(W_r=g^{coef}_{A,r}(w_{\Gc,r})+g^{coef}_{A,r}(u_\Gc)+\pi_{>r}(w_{>r}),W_{>r}=g^{coef}_{A,>r}(w_{>r}))\Bigg)\\
    &=\pp(g^{coef}_{A,r}(U_\Gc)+W_r=g^{coef}_{A,r}(w_{\Gc,r})+\pi_{>r}(w_{>r}),W_{>r}=g^{coef}_{A,>r}(w_{>r})).
    \end{align*}
    Similarly,
    \begin{align*}
    &\pp(U_\Gc+W_r=w_{G,r}\mid W_{>r}=w_{>r})\\
    &=\pp(g^{coef}_{A,r}(U_\Gc)+W_r=g^{coef}_{A,r}(w_{\Gc,r})+\pi_{>r}(w_{>r})\mid W_{>r}=g^{coef}_{A,>r}(w_{>r})).
    \end{align*}
    Note that $g^{coef}_{A,r}$ is a bijective mapping $\mathbb{F}_2^{\binom{[m]}{r}} \rightarrow \mathbb{F}_2^{\binom{[m]}{r}}$, which implies
    \begin{align*}
    &\sum_{w_{\Gc,r}}\Bigg(\pp(g^{coef}_{A,r}(U_\Gc)+W_r=g^{coef}_{A,r}(w_{\Gc,r})+\pi_{>r}(w_{>r}),W_{>r}=g^{coef}_{A,>r}(w_{>r})) \\
    & \cdot \log_2 \pp(g^{coef}_{A,r}(U_\Gc)+W_r=g^{coef}_{A,r}(w_{\Gc,r})+\pi_{>r}(w_{>r})\mid W_{>r}=g^{coef}_{A,>r}(w_{>r}))\Bigg) \\
    &= \sum_{w'_{\Gc,r}} \Bigg(\pp(g^{coef}_{A,r}(U_\Gc)+W_r=w'_{\Gc,r},W_{>r}=g^{coef}_{A,>r}(w_{>r}))\\
    &\cdot \log_2 \pp(g^{coef}_{A,r}(U_\Gc)+W_r=w'_{\Gc,r}\mid W_{>r}=g^{coef}_{A,>r}(w_{>r}))\Bigg),
    \end{align*}
    for any $w_{>r}$ where $w'_{\Gc,r}=g^{coef}_{A,r}(w_{\Gc,r})+\pi_{>r}(w_{>r})$ goes through all the values of $\mathbb{F}_2^{\binom{[m]}{r}}$. Analogically, as $g^{coef}_{A,>r}$ is also a bijection, one may note the following:

    \begin{align*}
    &\sum_{w_{>r}}\Bigg( \pp(g^{coef}_{A,r}(U_\Gc)+W_r=w'_{\Gc,r},W_{>r}=g^{coef}_{A,>r}(w_{>r}))\\
    &\cdot\log_2 \pp(g^{coef}_{A,r}(U_\Gc)+W_r=w'_{\Gc,r}\mid W_{>r}=g^{coef}_{A,>r}(w_{>r}))\Bigg)\\
    &=\sum_{w'_{>r}}\Bigg(\pp(g^{coef}_{A,r}(U_\Gc)+W_r=w'_{\Gc,r},W_{>r}=w'_{>r})\\
    &\cdot\log_2 \pp(g^{coef}_{A,r}(U_\Gc)+W_r=w'_{\Gc,r}\mid W_{>r}=w'_{>r})\Bigg),
    \end{align*}
    for any $w'_{\Gc,r}$ where $w'_{> r}=g^{coef}_{A,>r}(w_{>r})$ goes through all the values of $\mathbb{F}_2^{\binom{[m]}{>r}}$.
    Finally, the sum
    \begin{align*}
    &\sum_{w'_{\Gc,r},w'_{>r}}\Bigg(\pp(g^{coef}_{A,r}(U_\Gc)+W_r=w'_{\Gc,r},W_{>r}=w'_{>r})\\
    &\cdot \log_2\pp(g^{coef}_{A,r}(U_\Gc)+W_r=w'_{\Gc,r}\mid W_{>r}=w'_{>r})\Bigg)
    \end{align*}
    is equal to $-H(g^{coef}_{A,r}(U_\Gc)+W_r\mid W_{>r})$.
    \end{proof}

    In particular, this means that $d(U_\Gc,g^{coef}_{A,r}(U_\Gc))\le 2d(U_\Gc,W_r\mid W_{>r})$ by the triangle inequality. So, if $d(W_r\mid W_{>r},W_r\mid W_{>r})$ is small then $W_r\mid W_{>r}$ must be close to the uniform distribution on a subspace which is approximately preserved by all such transformations. So, our next order of business is to investigate which subspaces have this property.
    
    \subsubsection{Small orbit localization lemma and its corollary for subspace distance}\label{Sec5.2.2}
    Let $m \in \mN,\, r \in \mZ$ satisfy $0 \leq r \leq m$. Let $\Pc_{m,\leq r}=\{P \in \Pc_m \mid deg(P) \leq r\}$ be the space of polynomials of degree $\leq r$ and $\Pc_{m,r}=\Pc_{m,\leq r} /\Pc_{m,\leq r-1}$ be the quotient space of polynomials of degree $r$ ($\dim \Pc_m=2^m,\,\, \dim \Pc_{m,r}=\binom{m}{r}$). 
    As such, there is a family of isomorphisms on $\Pc_{m,r}\rightarrow \Pc_{m,r}$ equivalent to the family of isomorphisms on $\mathbb{F}_2^{\binom{[m]}{r}}\rightarrow \mathbb{F}_2^{\binom{[m]}{r}}$ induced by linear transformations. We call this family of linear transformations $Sym(m,r)$ in the domain of $\Pc_{m,r}$ and $\overline{Sym}(m,r)=\left\{g_{A,r}^{coef} \,\,\Big|\,\, A \in GL_m(\F_2)\right\}$ in the space of $\mathbb{F}_2^{\binom{[m]}{r}}$.

    \begin{claim}
        Let $m \in \mN,\, r \in \mZ$ satisfy $0 \leq r \leq m$. Let $\Gc \subseteq \Pc_{m,r}$ be a linear space such that $\Gc=\pi(\Gc)\,\, \text{ for all } \pi \in Sym(m,r)$. Then either $\Gc=\{0\}$ or $\Gc=\Pc_{m,r}$.
    \end{claim}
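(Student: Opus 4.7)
The plan is to interpret the claim as asserting the irreducibility of $F_{m,r}$ as a module under the natural action of $GL_m(\mathbb{F}_2)$ (essentially the action on $\Lambda^r(\mathbb{F}_2^m)$), and to prove it via the classical $p$-group fixed-point lemma applied to the subgroup of upper-triangular matrices.

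First I would observe that translations act trivially on $F_{m,r}$: for any homogeneous degree-$r$ polynomial $f$, the substitution $x\mapsto x+b$ produces $f(x)$ plus strictly lower-degree terms in $x$, so after projection back to the degree-$r$ part we recover $f$. Hence the $Sym(m,r)$-action factors through the linear action $f(x)\mapsto f(Ax)|_{\deg r}$ of $GL_m(\mathbb{F}_2)$, and $\Gc$ is invariant under this action.

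Assuming $\Gc\neq\{0\}$, let $B\subseteq GL_m(\mathbb{F}_2)$ denote the subgroup of upper-triangular matrices; since the diagonal entries are forced to equal $1$, $B$ is a $2$-group of order $2^{\binom{m}{2}}$. The $p$-group fixed-point lemma (every non-fixed $B$-orbit on $\Gc$ has size divisible by $2$, and $|\Gc|=2^{\dim\Gc}\geq 2$) yields a nonzero $v\in\Gc^B$. I would then compute $F_{m,r}^B$ explicitly: for each elementary transvection $I+E_{ij}$ with $i<j$, the induced map on $F_{m,r}$ fixes every $x_S$ unless $i\in S$ and $j\notin S$, in which case $x_S\mapsto x_S+x_{(S\setminus\{i\})\cup\{j\}}$ (the potential degree-$(r-1)$ term that could arise when $j\in S$ collapses via $x_j^2=x_j$ and is killed by the projection). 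Writing $v=\sum_S c_S x_S$ and matching coefficients, invariance under all such transvections forces $c_S=0$ whenever there is a pair $i<j$ with $i\in S$ and $j\notin S$. Only $S=\{m-r+1,\ldots,m\}$ avoids this constraint, so $F_{m,r}^B$ is one-dimensional and $\Gc$ must contain $x_{\{m-r+1,\ldots,m\}}$.

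Finally, the symmetric group $S_m$ embeds into $GL_m(\mathbb{F}_2)$ as permutation matrices, acting on $F_{m,r}$ by $x_S\mapsto x_{\sigma(S)}$, and it acts transitively on $r$-subsets of $[m]$; hence $\Gc$ contains $x_T$ for every $T$ with $|T|=r$, giving $\Gc=F_{m,r}$. The main subtlety I expect is the third-paragraph bookkeeping — correctly tracking that both the $b$-translations and the $x_j^2=x_j$ collapses vanish from the degree-$r$ computation — but once the $p$-group fixed-point input is in hand the remainder is formal.
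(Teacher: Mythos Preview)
Your proof is correct and takes a genuinely different route from the paper's. The paper proceeds by hand: for each pair $i,j$ it introduces the transposition $T_{i,j}$ (swap $x_i,x_j$) and the transformation $T'_{i,j}$ ($x_i\mapsto x_i+x_j$, $x_j\mapsto x_i$), checks case by case that the operator $T_{i,j}+T'_{i,j}$ annihilates every monomial $x_I$ except those with $i\in I$, $j\notin I$ (which it fixes), and then applies the product $\prod_{i\in I,\,j\notin I}(T_{i,j}+T'_{i,j})$ to an arbitrary nonzero $f\in\Gc$ to isolate a single monomial $x_I\in\Gc$; permutation transitivity finishes. Your argument replaces this explicit filtering by the structural observation that the upper unitriangular subgroup $B\subset GL_m(\mathbb{F}_2)$ is a $2$-group, so the $p$-group fixed-point lemma forces a nonzero $B$-fixed vector in any nonzero invariant subspace, and your transvection computation shows $F_{m,r}^B$ is the line through $x_{\{m-r+1,\ldots,m\}}$. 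The paper's approach is fully elementary and constructive (no auxiliary lemma, and one sees exactly which operators extract the monomial); yours is cleaner and places the claim in the standard ``highest weight vector'' framework for irreducibility in characteristic $p$, which also makes clear why the result would generalize.
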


    \begin{proof}
        Let $T_{i,j}$ be the transformation in $Sym(m,r)$ induced by the permutation of $x_i$ and $x_j$, and $T_{i,j}'$ be the transformation in $Sym(m,r)$ induced by the linear transformation that maps $x_i$ to $x_i+x_j$ and $x_j$ to $x_i$. Note that
        \[f \in \Gc \Rightarrow T_{i,j}f,\,\, T_{i,j}'f \in \Gc \Rightarrow T_{i,j}f+T_{i,j}'f \in \Gc.\]
        Examine $T_{i,j}f+T_{i,j}'f$ by looking at individual monomials:
        
        $(T_{i,j}+T_{i,j}')x_I=0$ if $i,j \notin I$. In fact, $x_I$ is not changed by $T_{i,j}$ or by $T_{i,j}'$.

        $(T_{i,j}+T_{i,j}')x_I=0$ if $i,j \in I$. In fact, $T_{i,j}$ transforms $x_I$ into $x_I$, as it permutes two coordinates inside it. $T_{i,j}'$ changes $x_I$ into $x_I$, as in the space $F_m$, the transformation that induces $T_{i,j}$ would transform $x_I$ into $x_I+x_{I \setminus \{j\}}$, with the second term ignored in $\Pc_{m,r}$ as it is of order $r-1$.

        $(T_{i,j}+T_{i,j}')x_I=x_I \text{ if } i \in I, j \notin I$. In fact, $T_{i,j}$ transforms $x_I$ into $x_{I \cup \{j\} \setminus \{i\}}$, and $T_{i,j}'$ transforms $x_I$ into $x_{I \cup \{j\} \setminus \{i\}}+x_I$.

        $(T_{i,j}+T_{i,j}')x_I=0 \text{ if } i \notin I, j \in I$. In fact, $T_{i,j}$ and $T'_{i,j}$ both transform $x_I$ into $x_{I \cup \{i\} \setminus \{j\}}$.

        Assume $\Gc \neq \{0\}$. This means $\exists f: f \in \Gc$, for which $\exists I: coef(f)_{x_I}=1$. As such,
        \[\left(\prod_{i \in I, j \notin I}(T_{i,j}+T_{i,j}')\right)f=x_I.\]
        This is due to every transformation either preserving or erasing a monomial, and for every monomial except $x_I$, there exists a pair $(i,j)$ such that the transformation erases this monomial. This implies that $\exists I: x_I \in \Gc$. Finally, note that $\forall J \subseteq [m] \text{ with } |J|=r, \,\,\exists \pi \in Sym(m,r): \pi(x_I)=x_J$, thus $\Gc$ is a space containing $\{x_J\mid  |J|=r\}$, which implies $\Gc=\Pc_{m,r}$.
    \end{proof}

    This claim plays a significant role in the following lemma. Define 
    \[\dist(A,B):=2\dim(A+B)-\dim(A)-\dim(B)=\dim(A)+\dim(B)-2\dim(A\cap B).\]

\begin{lemma}\label{nearSymSubspace}{\bf (Small orbit localization lemma)}
Let $n\in \mN$, $\mathbb{F}$ be a finite field, $\mathcal{T}$ be a set of linear transformations on $\mathbb{F}^n$, and $\mathcal{W}$ be a probability distribution over subspaces of $\mathbb{F}^n$ such that for every $T \in \mathcal{T}$ and every subspace $\Gc_0$ of $\mathbb{F}^n$, the following equality is true: \[\mathbb{P}_{\Gc\sim\mathcal{W}}[\Gc=\Gc_0]=\mathbb{P}_{\Gc\sim\mathcal{W}}[\Gc=T\Gc_0].\]
        Then there must exist a subspace $\Gc^\star$ of $\mathbb{F}^n$ such that $T\Gc^\star=\Gc^\star$ for all $T\in\mathcal{T}$ and
        \[\mathbb{E}_{\Gc\sim\mathcal{W}}[\dist(\Gc,\Gc^\star)]\le\frac{9}{2}\mathbb{E}_{\Gc,\Gc'\sim\mathbb{\mathcal{W}}}[\dist(\Gc,\Gc')]\]
    \end{lemma}
    \begin{proof}
        For a probability distribution $\mathcal{W}$ over subspaces of $\mathbb{F}^n$, let 
        \begin{align*}
            \Delta \mathcal{W}:=\mathbb{E}_{\Gc,\Gc'\sim\mathcal{W}}[\dist(\Gc,\Gc')],\,\, \dim(\mathcal{W}):=\mathbb{E}_{\Gc\sim\mathcal{W}}[\dim(\Gc)].
        \end{align*}
        Also, given two probability distributions $\mathcal{W}$ and $\mathcal{W}'$ over subspaces of $\mathbb{F}^n$, let the distance between them be the minimum over all couplings\footnote{Given random variables $A$ and $B$ a coupling between them is a probability distribution over $(A,B)$ such that $A$ and $B$ still have their original probability distributions.} of them of the expected distance between their subspaces. In other words, we define
        \[\dist(\mathcal{W},\mathcal{W}')=\min_{p(\Gc,\Gc'): p(\Gc)= \mathcal{W}, p(\Gc')=\mathcal{W}'}\mathbb{E}[\dist(\Gc,\Gc')]\]
        $\dist(\mathcal{W},\mathcal{W}')=0$ iff $\mathcal{W}=\mathcal{W}'$ because that is the only circumstance under which we can always have $\Gc=\Gc'$, and $\dist(\mathcal{W},\mathcal{W}')=\dist(\mathcal{W}',\mathcal{W})$ for all $\mathcal{W}$ and $\mathcal{W}'$. Showing that this definition of distance satisfies the triangle inequality is a little more complicated. In order to do so, first let $\mathcal{W}$, $\mathcal{W}'$, and $\mathcal{W}''$ be probability distributions over subspaces of $\mathbb{F}^n$. Now, consider drawing $\Gc'\sim \mathcal{W}'$ and then drawing $\Gc$ and $\Gc''$ from their probability distributions conditioned on that value of $\Gc'$ under the couplings minimizing $\mathbb{E}[\dist(\Gc,\Gc')]$ and $\mathbb{E}[\dist(\Gc',\Gc'')]$ (these couplings are not necessarily unique, but we can pick one arbitrarily). Under that probability distribution of $(\Gc,\Gc',\Gc'')$ we have that:
        \begin{align*}
        \dist(\mathcal{W},\mathcal{W}'')&\le\mathbb{E}[\dist(\Gc,\Gc'')]\\
        &\le \mathbb{E}[\dist(\Gc,\Gc')+\dist(\Gc',\Gc'')]\\
        &=\mathbb{E}[\dist(\Gc,\Gc')]+\mathbb{E}[\dist(\Gc',\Gc'')]\\
        &=\dist(\mathcal{W},\mathcal{W}')+\dist(\mathcal{W}',\mathcal{W}'')
        \end{align*}
        So, this definition of the distance between two probability distributions over subspaces has all of the necessary properties. 

        We will show that for any $\mathcal{W}$ there exists a $\mathcal{W}'$ close to $\mathcal{W}$ with $\Delta \mathcal{W}'$ significantly smaller than $\Delta \mathcal{W}$ and then argue that repeated substitution must eventually yield a probability distribution that returns a subspace that is preserved by all $T\in \mathcal{T}$ with high probability. As such, a random $\Gc\sim \mathcal{W}$ must be close to this subspace in expectation.

        In order to do that, first let $d_i=\mathbb{E}_{\Gc_1,...,\Gc_{i+1}\sim \mathcal{W}}[\dim(\Gc_1+...+\Gc_{i+1})-\dim(\Gc_1+...+\Gc_i)]$ for all $i$, and note that $\dim(\mathcal{W})=d_0$ and $\Delta \mathcal{W}=2 d_1$. For our first candidate for $\mathcal{W}'$, let $\mathcal{W}^\star$ be the probability distribution of $\Gc_1+\Gc_2$ when $\Gc_1,\Gc_2\sim\mathcal{W}$. Note that
        \begin{align*}
        \Delta \mathcal{W}^\star&=\mathbb{E}_{\Gc_1,...,\Gc_4\sim\mathcal{W}} \left[ \dim(\Gc_1+\Gc_2)+\dim(\Gc_3+\Gc_4)-2\dim((\Gc_1+\Gc_2)\cap(\Gc_3+\Gc_4))\right] \\
        &=\mathbb{E}_{\Gc_1,...,\Gc_4\sim\mathcal{W}} \left[2\dim(\Gc_1+\Gc_2+\Gc_3+\Gc_4)-\dim(\Gc_1+\Gc_2)-\dim(\Gc_3+\Gc_4)\right]\\
        &=2d_3+2d_2.
        \end{align*}
        Also,
        \begin{align*}
        \dist(\mathcal{W}^\star,\mathcal{W})&\le \mathbb{E}_{\Gc_1,\Gc_2\sim\mathcal{W}}[\dist(\Gc_1+\Gc_2,\Gc_1)]\\
        &=\mathbb{E}_{\Gc_1,\Gc_2\sim\mathcal{W}}[\dim(\Gc_1+\Gc_2)-\dim(\Gc_1)]\\
        &=d_1
        \end{align*}
        As such, if $d_2+d_3$ is significantly less than $d_1$ this would be suitable for $\mathcal{W}'$. In order to cover the case where it is not, let $\mathcal{W}^{(i)}$ be the probability distribution of $\Gc_0\cap(\sum_{j=1}^i \Gc_j)$ when $\Gc_0,...,\Gc_i\sim \mathcal{W}$. For each $i$,
        \[\scalebox{0.81}{$\Delta\mathcal{W}^{(i)}:=\mathbb{E}_{\Gc_0,...\Gc_i,\Gc'_0,..,\Gc'_i\sim\mathcal{W}}\left[ 2\dim\left(\Gc_0\cap\left(\sum_{j=1}^i \Gc_j\right)\right)-2\dim\left(\Gc_0\cap \Gc'_0 \cap\left(\sum_{j=1}^i \Gc_j\right)\cap \left(\sum_{j=1}^i \Gc'_j\right)\right)\right]$}.\]
        In order to bound that expression, first observe that the following is true: 
        \[\scalebox{0.93}{$\mathbb{E}\left[\dim\left(\Gc_0\cap\left(\sum_{j=1}^i \Gc_j\right)\right) \right]=\mathbb{E}\left[\dim(\Gc_0)+\dim\left(\sum_{j=1}^i \Gc_j\right)-\dim\left(\sum_{j=0}^i \Gc_j\right) \right]=d_0-d_i$}\]
        and
        \begingroup\makeatletter\def\f@size{9.}\check@mathfonts
        \def\maketag@@@#1{\hbox{\m@th\large\normalfont#1}}%
        \begin{align*}
        &\E\left[ \dim\left(\Gc_0\cap \Gc'_0 \cap\left(\sum_{j=1}^i \Gc_j\right)\cap \left(\sum_{j=1}^i \Gc'_j\right)\right)\right]\\
        &= \E\left[2 \dim\left(\Gc_0\cap \Gc'_0 \cap\left(\sum_{j=1}^i \Gc_j\right)\right)-\dim \left(\Gc_0\cap \Gc'_0 \cap\left(\sum_{j=1}^i \Gc_j\right)+\Gc_0\cap \Gc'_0\cap \left(\sum_{j=1}^i \Gc'_j\right)\right)\right] \\
        &\ge \E\left[2 \dim\left(\Gc_0\cap \Gc'_0 \cap\left(\sum_{j=1}^i \Gc_j\right)\right)-\dim \left(\Gc_0\cap \Gc'_0\right)\right]\\
        &= \E\left[\dim\left(\Gc_0\cap \Gc'_0\right)+2 \dim\left(\Gc'_0 \cap\left(\sum_{j=1}^i \Gc_j\right)\right)-2 \dim\left(\Gc_0\cap \Gc'_0+\Gc'_0 \cap\left(\sum_{j=1}^i \Gc_j\right)\right)\right] \\
        &\ge \E\left[\dim\left(\Gc_0\cap \Gc'_0\right)+2 \dim\left(\Gc'_0 \cap\left(\sum_{j=1}^i \Gc_j\right)\right)-2 \dim\left(\Gc'_0 \cap\left(\sum_{j=0}^i \Gc_j\right)\right)\right]\\
        &= (d_0-d_1)+2(d_0-d_i)-2(d_0-d_{i+1})=d_0-d_1-2(d_i-d_{i+1}).
        \end{align*}\endgroup
        Putting these together, we get that 
        \[\Delta\mathcal{W}^{(i)}\le 2(d_0-d_i)-2[d_0-d_1-2(d_i-d_{i+1})]=2d_1+2d_i-4d_{i+1}.\]
        Also,
        \begin{align*}
        \dist(\mathcal{W}^{(i)},\mathcal{W})
        &\le \mathbb{E}_{\Gc_0,...,\Gc_i}[\dist(\Gc_0,\Gc_0\cap(\sum_{j=1}^i\Gc_j)]\\
        &=\mathbb{E}_{\Gc_0,...,\Gc_i}[\dim(\Gc_0)-\dim(\Gc_0\cap(\sum_{j=1}^i\Gc_j))]\\
        &=d_i
        \end{align*}
        
        Finally, if $d_2\ge (5/9)d_1$ then $\Delta\mathcal{W}^{(1)}\le 4d_1-4d_2\le (16/9)d_1=(8/9)\Delta\mathcal{W}$. If $d_2< (5/9)d_1$ and $d_3\ge (1/3)d_1$ then $\Delta\mathcal{W}^{(2)}\le 2d_1+2d_2-4d_3\le (16/9)d_1=(8/9)\Delta\mathcal{W}$. Otherwise, $d_2< (5/9)d_1$ and $d_3<(1/3)d_1$, in which case $\Delta\mathcal{W}^\star=2d_2+2d_3<(8/9)\Delta\mathcal{W}$. In all three cases, this gives us a $\mathcal{W}'$ such that $\mathcal{W}'$ is also preserved by all $T\in \mathcal{T}$, $\Delta\mathcal{W}'\le (8/9)\Delta\mathcal{W}$, and $\dist(\mathcal{W},\mathcal{W}')\le\max(d_1,d_2)=d_1=\Delta\mathcal{W}/2$.
        
        That means that there exists an infinite sequence of probability distributions over subspaces $\mathcal{W}_0,\mathcal{W}_1,...$ that satisfies the following properties:
        \begin{enumerate}
        \item $\mathcal{W}_0=\mathcal{W}$.
        \item $\mathcal{W}_i$ is preserved by all $T\in \mathcal{T}$ for all $i\ge 0$.
        \item $\Delta\mathcal{W}_{i+1}\le (8/9)\Delta\mathcal{W}_i$ for all $i$.
        \item $\dist(\mathcal{W}_{i+1},\mathcal{W}_i)\le\Delta\mathcal{W}_i/2$ for all $i$.
        \end{enumerate}
        
        Next, observe that $\Delta\mathcal{W}_i=\mathbb{E}_{\Gc,\Gc'\sim \mathcal{W}_i}[\dist(\Gc,\Gc')]\ge \mathbb{P}_{\Gc,\Gc'\sim \mathcal{W}_i}[\Gc\ne \Gc']$ for all $i$. So, for all sufficiently large $i$, $\mathcal{W}_i$ returns one subspace with high probability. Furthermore, $\dist(\mathcal{W}_i,\mathcal{W}_{i+1})$ is at least equal to the difference in their probabilities of returning that subspace, so it must be the same subspace for all sufficiently large $i$. Call it $\Gc^\star$ and observe that it must be preserved by all $T\in \mathcal{T}$. Finally,
        \begin{align*}
        \mathbb{E}_{\Gc\sim \mathcal{W}}[\dist(\Gc,\Gc^\star)]&=\lim_{i\to\infty}\dist(\mathcal{W},\mathcal{W}_i)\\
        &\le\sum_{j=0}^\infty (8/9)^j\Delta\mathcal{W}/2\\
        &=(9/2)\Delta\mathcal{W}   
        \end{align*}
    \end{proof}

        \begin{corollary}\label{cor:general_dist_bound}
        Let $n\in \mN$, $\mathbb{F}$ be a finite field, and  $\mathcal{T}$ be a group of linear transformations on $\mathbb{F}^n$. For every subspace $\Gc\subseteq \mathbb{F}^n$, there exists a subspace $\Gc^\star\subseteq\mathbb{F}^n$ such that $T\Gc^\star=\Gc^\star$ for all $T\in\mathcal{T}$ and $\dist(\Gc,\Gc^\star)\le (9/2)\max_{T\in\mathcal{T}} \dist(\Gc,T\Gc)$.
    \end{corollary}

    \begin{corollary}\label{cor:dist_bound}
        Let $m \in \mN,\, r \in \mZ$ satisfy $0 \leq r \leq m$. For every subspace $\Gc\subseteq \Pc_{m,r}$, there exists $\pi \in \overline{Sym}(m,r): \dist(\Gc, \pi(\Gc)) \geq \frac{2}{9}\min(\dim(\Gc),\binom{m}{r}-\dim(\Gc)).$
    \end{corollary}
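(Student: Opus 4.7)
The plan is to derive this as a direct consequence of the preceding lemma, by choosing an appropriate distribution $\mathcal{W}$ over subspaces of $F_{m,r}$. Specifically, I would take $\mathcal{W}$ to be the uniform distribution over the orbit $\{\pi(\Gc) : \pi \in Sym(m,r)\}$ of $\Gc$ under the symmetry group. The symmetry invariance of $\mathcal{W}$ is then automatic: for any $T \in Sym(m,r)$, composing with $T$ merely permutes the orbit, so $\pp_{G \sim \mathcal{W}}[G = G_0] = \pp_{G \sim \mathcal{W}}[G = TG_0]$ as required by the hypothesis of the lemma.

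Next I would observe that every element of the orbit has the same dimension as $\Gc$, since elements of $Sym(m,r)$ are automorphisms and so preserve dimension. Hence $dim(\mathcal{W}) = dim(\Gc)$, and the conclusion of the lemma gives
\[\E_{G,G' \sim \mathcal{W}}[dist(G,G')] \geq \tfrac{2}{9}\min\bigl(dim(\Gc), \tbinom{m}{r} - dim(\Gc)\bigr).\]
Since the left-hand side is an average, there must exist $\pi_1, \pi_2 \in Sym(m,r)$ (and hence their counterparts in $\overline{Sym}(m,r)$) such that $dist(\pi_1(\Gc), \pi_2(\Gc))$ is at least the right-hand side.

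Finally I would use the automorphism invariance of $dist$: since any $\pi \in \overline{Sym}(m,r)$ satisfies $\pi(A+B) = \pi(A) + \pi(B)$ and preserves dimension, we have $dist(\pi A, \pi B) = dist(A, B)$. Applying this with $\pi = \pi_1^{-1}$ yields $dist(\Gc, \pi_1^{-1}\pi_2(\Gc)) = dist(\pi_1(\Gc), \pi_2(\Gc))$, so taking $\pi := \pi_1^{-1}\pi_2 \in \overline{Sym}(m,r)$ gives the required bound. There is no real obstacle here beyond checking that $\overline{Sym}(m,r)$ is closed under composition and inversion (i.e.\ it is a group), which follows from the fact that affine transformations form a group and the map $g_{A,b} \mapsto g_{A,b,r}^{coef}$ respects this structure.
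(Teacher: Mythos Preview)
Your proposal is correct and is exactly the intended derivation: the paper states this corollary immediately after the lemma without an explicit proof, and applying the lemma to the uniform distribution on the $Sym(m,r)$-orbit of $\Gc$, then extracting a single $\pi$ via the automorphism-invariance of $dist$, is the natural (and essentially only) way to read it off. Your remark that $\overline{Sym}(m,r)$ is a group because the affine group is one is the only point that needs checking, and it is straightforward.
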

        \subsubsection{Linking permutation invariance to bounds on the entropy of sums}
        \begin{theorem}[Recurrent layer entropy inequality]
            Let $m \in \mN,\, r \in \mZ$ satisfy $0 \leq r \leq m$. The following layer entropy inequality holds:
            \begin{equation}\label{ineq:recineq}
            \begin{split}
                &\frac{1}{140}\min\left(H(W_r\mid W_{>r}),\binom{m}{r}-H(W_r\mid W_{>r})\right)+H(W_r\mid W_{>r}) \\
                &\leq H(W_r+W'_r\mid W_{>r},W'_{>r})
            \end{split}
            \end{equation}
        \end{theorem}
        \begin{proof}
        Define $\upsilon:=d(W_r\mid W_{>r},W_r\mid W_{>r})$.
        Corollary \ref{CEFR} and Lemma \ref{perm} imply the existence of a subspace $\Gc\subseteq \Pc_{m,r}$ such that for all $\pi\in Sym(m,r)$
        \[d(\pi(U_\Gc),W_r\mid W_{>r})=d(U_\Gc,W_r\mid W_{>r}) \leq 7\upsilon.\]
        Using the triangle inequality for the Ruzsa distance,
        \[d(\pi(U_\Gc),U_\Gc) \leq d(\pi(U_\Gc),W_r\mid (W_{>r}=w_{>r}))+d(U_\Gc,W_r\mid (W_{>r}=w_{>r})).\]
        for all $w_{>r}$. Taking the expectation with respect to $w_{>r}$, we obtain
        \[d(\pi(U_\Gc),U_\Gc) \leq d(\pi(U_\Gc),W_r\mid W_{>r})+d(U_\Gc,W_r\mid W_{>r}) \leq 14\upsilon.\]
        Note that
        \[d(\pi(U_\Gc),U_\Gc)=H(U_{\Gc+\pi(\Gc)})-H(U_\Gc)=\dim(\Gc+\pi(\Gc))-\dim(\Gc)=\frac{1}{2}\dist(\Gc,\pi(\Gc)).\]
        From the last subsubsection, we know that there exists $\pi \in \overline{Sym}$ such that
        \[\dist(\Gc,\pi(\Gc))\geq \frac{2}{9} \min \left(\dim(\Gc),\binom{m}{r}-\dim(\Gc)\right).\]
        This gives us the following bound on $\dim(\Gc)$:
        \[\frac{1}{9} \min \left(\dim(\Gc),\binom{m}{r}-\dim(\Gc)\right) \leq d(\pi(U_\Gc),U_\Gc) \leq 14\upsilon.\]
        Using $d(X,Y)\geq \frac{1}{2}|H(X)-H(Y)|$, we get
        \[\frac{1}{2}|\dim(\Gc)-H(W_r\mid W_{>r})|=\frac{1}{2}|H(U_\Gc)-H(W_r\mid W_{>r})|\leq d(U_\Gc,W_r\mid W_{>r}) \leq 7\upsilon.\]
        To conclude, note that
        \[|\min(a,c-a)-\min(b,c-b)|=\frac{1}{2}||c-2a|-|c-2b|| \leq |a-b|.\]
        Setting $a=\dim(\Gc),\,\, b=H(W_r\mid W_{>r}),\,\, c=\binom{m}{r}$, we show
        \[\scalebox{0.95}{$\left|\min\left(\dim(\Gc),\binom{m}{r}-\dim(\Gc)\right)-\min\left(H(W_r\mid W_{>r}),\binom{m}{r}-H(W_r\mid W_{>r})\right)\right| \leq 14 \upsilon.$}\]
        Finally, this implies
        \begin{align*}
        &\min\left(H(W_r\mid W_{>r}),\binom{m}{r}-H(W_r\mid W_{>r})\right) \\
        &\leq \min\left(\dim(\Gc),\binom{m}{r}-\dim(\Gc)\right)+14\upsilon \leq 140\upsilon.
        \end{align*}
        \end{proof}
    \subsubsection{From the recurrence bound to a polarization bound}\label{Sec6}
        In this section, we use $W^{(m)}_r$ instead of $W_r$, where the superscript $(m)$ indicates the dependence on the parameter $m$. Recall that $Z \sim Ber(\delta)^{\F_2^m}$, where $\delta \in (0,\frac{1}{2})$ represents the error probability. Define 
        \[H(\delta):=-\delta \log_2(\delta) - (1-\delta) \log_2(1-\delta).\]
        \begin{theorem}\label{thm: recineq}
            Let $m,\, n \in \mN,\, r \in \mZ$ satisfy $0 \leq r \leq m,\, n=2^m$. $m$ will be considered a varying parameter here. Denote $f_{m,r}:=H(W^{(m)}_r\mid W^{(m)}_{>r})$ and $a_{m, r}:=H(W^{(m)}_{\leq r}\mid W^{(m)}_{>r})$. The following layer polarization inequality holds:
            \begin{equation}\label{eq_main_thm_part_1}
               a_{m+1,r} \leq a_{m, r}+a_{m, r-1}-\frac{1}{140}\min(f_{m,r},\binom{m}{r}-f_{m,r})
            \end{equation}
        \end{theorem}
        \begin{proof}
        Consider the following equality: $g(x_1,x_2 \ldots x_m)=g(x_1,x_2 \ldots x_{m-1},0)+x_m (g(x_1,x_2 \ldots x_{m-1},0)+g(x_1,x_2 \ldots x_{m-1},1))$ for any function $g$ such that \\$eval(g(x_1,...,x_m))\in RM(m,r)$. Note that
        \begin{align*}
            &eval(g(x_1,x_2 \ldots x_{m-1},0)) \in RM(m-1,r);\\
            &eval(g(x_1,x_2 \ldots x_{m-1},0)+g(x_1,x_2 \ldots x_{m-1},1)) \in RM(m-1,r-1).
        \end{align*}
        As such, there is a connection between $RM(m,r)$ and $(RM(m-1,r),RM(m-1,r-1))$, which translates into a generating matrix recurrence as follows:
    
        \[G_{m,r}=\left(\begin{matrix} G_{m-1,r} & 0 \\ G_{m-1,r} & G_{m-1, r-1} \end{matrix}\right), \,\,G_{m,m}=\left(\begin{matrix}
            1 & 0 \\ 1 & 1
        \end{matrix}\right)^{\otimes m}, \,\,G_{m,0}=\left(\begin{matrix}
            1 \\ 1
        \end{matrix}\right)^{\otimes m}.\]
    
        Additionally, define $\overline{G_{m,r}}$ as a matrix with columns being evaluations of monomials of degree at least $r+1$. Let $W^{(m)}$ and $W'^{(m)}$ be two independent instances of $W^{(m)}$. Analogously to $G_{m,r}$, $\overline{G_{m,r}}$ adheres to the following recurrent relation:
    
        \[\overline{G_{m,r}}=\left(\begin{matrix} \overline{G_{m-1,r}} & 0 \\ \overline{G_{m-1,r}} & \overline{G_{m-1,r-1}} \end{matrix}\right).\]
        Considering $H(W^{(m+1)}_{\leq r}\mid W^{(m+1)}_{>r})$, note that $W^{(m+1)}_{\leq r}$ is a permuted version of \\$(G_{m+1,m+1}Z^{(m+1)})_{\leq r}=\overline{G_{m+1,m-r}}^T Z^{(m+1)}$ and $W^{(m+1)}_{> r}$ is a permuted version of \\$(G_{m+1,m+1}Z^{(m+1)})_{> r}=G_{m+1,m-r}^T Z^{(m+1)}$. As such, the recurrent relations for $\overline{G_{m,r}}, G_{m,r}$ imply
        \begin{align*}
            &a_{m+1,r}=H(W^{(m+1)}_{\leq r}\mid W^{(m+1)}_{>r})=H(\overline{G_{m+1,m-r}}^T Z^{(m+1)}\mid G_{m+1,m-r}^T Z^{(m+1)})\\
            &=H\left(\left(\begin{matrix} \overline{G_{m,m-r}}^T & \overline{G_{m,m-r}}^T \\ 0  & \overline{G_{m,m-r-1}}^T \end{matrix}\right)\left(\begin{matrix} Z^{(m)} \\ Z'^{(m)} \end{matrix}\right)\,\Bigg|\,\left(\begin{matrix} G_{m,m-r}^T & G_{m,m-r}^T \\ 0 & G_{m, m-r-1}^T \end{matrix}\right)\left(\begin{matrix} Z^{(m)} \\ Z'^{(m)} \end{matrix}\right)\right)\\
            &=H\left(W_{\leq r-1}^{(m)}+W_{\leq r-1}'^{(m)},W_{\leq r}'^{(m)}\,\big|\,W_{>r-1}^{(m)}+W_{>r-1}'^{(m)},W_{>r}'^{(m)}\right)\\
            &=H\left(W_{\leq r-1}^{(m)},W_{\leq r}'^{(m)}\,\big|\,W_{>r}^{(m)},W_{>r}'^{(m)},W_{r}^{(m)}+W_{r}'^{(m)}\right)\\
            &=H\left(W_{\leq r}^{(m)},W_{\leq r}'^{(m)}\,\big|\,W_{>r}^{(m)},W_{>r}'^{(m)}\right)-H\left(W_{r}^{(m)}+W_{r}'^{(m)}\,\big|\,W_{>r}^{(m)},W_{>r}'^{(m)}\right)\\
            &\leq 2a_{m, r}-f_{m,r}-\frac{1}{140}\min(f_{m,r},\binom{m}{r}-f_{m,r})\\
            &=a_{m, r}+a_{m, r-1}-\frac{1}{140}\min(f_{m,r},\binom{m}{r}-f_{m,r}).
        \end{align*}
        \begin{itemize}
            \item The third-to-last line follows from $H(A\mid B,C)=H(A,B\mid C)-H(B\mid C)$ for \[A=(W_{\leq r-1}^{(m)},\,\,W_{\leq r}'^{(m)}),\,\, B=W_{r}^{(m)}+W_{r}'^{(m)},\,\, C=(W_{>r}^{(m)},W_{>r}'^{(m)}).\]
            \item The inequality follows from the left term being equal to $2a_{m, r}$ and the right term bounded in the previous section.
            \item The last equation follows from $f_{m,r}=a_{m, r} - a_{m, r-1}$.
        \end{itemize}
        \end{proof}
        This concludes the proof of Theorem 3.1 (1).
    \subsection{Proof of Theorem 3.1 (2): From the polarization bound to an entropy bound}


In this section, we focus on the double indexed sequence of numbers $(a_{m,r})_{m\in\mathbb{N},r\in\{0,\ldots,m\}}$. As the numbers $a_{m,r}$ represent entropies, they satisfy the inequalities 
\begin{equation}\label{eqn: amr easy 1}
0\leq a_{m,r}\leq \binom{m}{\leq r}
\end{equation}
and the equalities
\begin{equation}\label{eqn: amr easy 2}
a_{m,m}=2^m H(\delta).
\end{equation}
In addition,  we consider a corollary of a partial order of entropies theorem from \cite{Ye19}, which establishes the monotonicity of normalized layer entropies.
\begin{theorem}[Layer monotonicity \cite{Ye19}]\label{corr2}
     Let $m,\, r \in \mathbb{Z}$ satisfy $0 \leq r < m$. Let $f_{m,r}^{avg}:=\frac{H(W_r \mid W_{>r})}{\binom{m}{r}}$ for $r \in \{0\} \cup [m]$. Then $f^{avg}_{m+1,r} \leq f^{avg}_{m,r} \leq f^{avg}_{m,r+1}$.
\end{theorem}
Finally, the numbers $a_{m,r}$ satisfy the inequality of Theorem~\ref{thm: recineq}. In this section we show that the above inequalities are sufficient to deduce the upper bound of Theorem~\ref{thm: final_res_p2}.
    
Our first goal is to write the inequality of Theorem \ref{thm: recineq} as a recurrence relation on entropies $a_{m,r}$, satisfied under additional constraints on parameters $m$ and $r$. Then, we define a stochastic process on $r$ that, given the linear recurrent inequality, defines a submartingale $\frac{a_{m-t,r}}{2^{m-t}}$ with a discrete time parameter $t$. Finally, we show that the probability distribution of $r$ is concentrated around values of $r$ that correspond to small values of $a_{m,r}$, which implies a small upper bound on $a_{m,r}$.
    
    We define the following parameter $r$ for variation: 
    \[r(m,\epsilon):=\max \{r\mid  f^{avg}_{m,r}<1-\epsilon\}.\]
   
    By Theorem \ref{corr2},
    \[\forall r \leq r(m,\epsilon): f^{avg}_{m,r}\leq 1-\epsilon.\]
    This implies that
    \begin{align*}
        &\forall r<r(m,\epsilon): \frac{1}{140}min\left(f_{m,r},\binom{m}{r}-f_{m,r}\right) \geq min\left(\frac{f_{m,r}}{140},\frac{\epsilon}{140}\binom{m}{r}\right) \geq \frac{\epsilon}{140}f_{m,r}, \\
        &f_{m+1,\leq r} \leq a_{m, r}+a_{m, r-1}-\frac{\epsilon}{140}f_{m,r}=\left(1-\frac{\epsilon}{140}\right)a_{m, r}+\left(1+\frac{\epsilon}{140}\right)a_{m, r-1}.
    \end{align*}
    Note that the behavior of $r(m,\epsilon)$ is difficult to examine. The next claim allows us to find a lower bound for $r(m,\epsilon)$ with a clear asymptotical behavior.

    \begin{claim}
        Let $m \in \mN,\epsilon \in (0,1)$.
        \[\frac{\binom{m}{\leq r(m,\epsilon)}}{2^m} \geq 1-\frac{H(\delta)}{1-\epsilon}.\]
    \end{claim}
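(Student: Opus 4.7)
The plan is to combine the chain rule of entropy with the observation that $r(m,\epsilon)$ is precisely the threshold beyond which each $f_{m,r}^{avg}$ is forced to be large. First I would note that $G = G_{m,m}$ is invertible over $\mathbb{F}_2$, so the random vector $W^{(m)} = G Z^{(m)}$ has the same entropy as the noise $Z^{(m)} \sim \mathrm{Ber}(\delta)^n$, giving
\[
\sum_{r=0}^{m} f_{m,r} \;=\; \sum_{r=0}^{m} H\bigl(W_r^{(m)} \,\big|\, W_{>r}^{(m)}\bigr) \;=\; H\bigl(W^{(m)}\bigr) \;=\; H\bigl(Z^{(m)}\bigr) \;=\; 2^m H(\delta),
\]
where the second equality is the chain rule applied to the partition of the coordinates of $W^{(m)}$ by monomial degree.

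Next I would use the monotonicity property from Corollary \ref{corr2}: $f_{m,r}^{avg}$ is nondecreasing in $r$. By the definition of $r(m,\epsilon)$ as the largest index at which $f_{m,r}^{avg} < 1-\epsilon$, monotonicity forces
\[
f_{m,r} \;\geq\; (1-\epsilon)\binom{m}{r} \qquad \text{for every } r > r(m,\epsilon).
\]
Summing over these values of $r$ gives
\[
\sum_{r > r(m,\epsilon)} f_{m,r} \;\geq\; (1-\epsilon)\sum_{r > r(m,\epsilon)} \binom{m}{r} \;=\; (1-\epsilon)\left(2^m - \binom{m}{\leq r(m,\epsilon)}\right).
\]

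Finally, combining this lower bound with the total-entropy identity $\sum_r f_{m,r} = 2^m H(\delta)$ and using $f_{m,r}\geq 0$ to drop the terms with $r \leq r(m,\epsilon)$, I obtain
\[
(1-\epsilon)\left(2^m - \binom{m}{\leq r(m,\epsilon)}\right) \;\leq\; 2^m H(\delta),
\]
which rearranges to the claimed bound $\binom{m}{\leq r(m,\epsilon)}/2^m \geq 1 - H(\delta)/(1-\epsilon)$. There is no real obstacle here; the only subtlety is making sure the invertibility of $G_{m,m}$ (already noted in Remark \ref{perm_of_RM}) is invoked to justify $H(W^{(m)}) = nH(\delta)$, and that Corollary \ref{corr2} is used in the correct direction (monotonicity in $r$, not in $m$).
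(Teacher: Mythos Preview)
Your proof is correct and follows essentially the same route as the paper's: both use that $f_{m,r}^{avg}\ge 1-\epsilon$ for every $r>r(m,\epsilon)$, sum to get $(1-\epsilon)\bigl(2^m-\binom{m}{\le r(m,\epsilon)}\bigr)\le 2^m H(\delta)$, and rearrange. One small remark: you do not actually need to invoke Corollary~\ref{corr2} for the step $f_{m,r}^{avg}\ge 1-\epsilon$ when $r>r(m,\epsilon)$, since this is immediate from the definition of $r(m,\epsilon)$ as the \emph{maximum} index with $f_{m,r}^{avg}<1-\epsilon$.
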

    \begin{proof}
        $\forall r>r(m,\epsilon): f_{m,r}^{avg}>1-\epsilon$. This implies $\frac{H(W^{(m)}_{\geq r})}{\binom{m}{\geq r}}=\frac{\sum_{i: r \leq i \leq m}f_{m,i}^{avg}\binom{m}{i}}{\sum_{i: r \leq i \leq m}\binom{m}{i}}>1-\epsilon$, and as such,
        \[\forall r>r(m,\epsilon): \,\, 2^m H(\delta) \geq a_{m,m}-a_{m,r-1} = H(W^{(m)}_{\geq r}) \geq (1-\epsilon)\binom{m}{\geq r}.\]
        Therefore,
        \[ \frac{H(\delta)}{1-\epsilon} \geq \frac{\binom{m}{\geq r}}{2^m}=1-\frac{\binom{m}{<r}}{2^m}.\]
        Using this inequality for $r=r(m,\epsilon)+1$, we get
        \[\frac{\binom{m}{\leq r(m,\epsilon)}}{2^m} \geq 1-\frac{H(\delta)}{1-\epsilon}.\]
    \end{proof}

    \begin{corollary}\label{r_star}
        Let $m \in \mN$ be a varying parameter. Let $r^*(m,\epsilon)=\min \{r \in \{0\} \cup [m]\mid \frac{\binom{m}{\leq r}}{2^m} \geq 1-\frac{H(\delta)}{1-\epsilon}\}$. The following relations are true:
        \begin{enumerate}
            \item $r^*(m,\epsilon) \leq r(m, \epsilon)$,
            \item $\lim_{m \rightarrow \infty}\frac{\binom{m}{\leq r^*(m,\epsilon)}}{2^m} = 1-\frac{H(\delta)}{1-\epsilon},$
            \item $r^*(m,\epsilon)=\frac{m}{2}+C(\epsilon)\sqrt{m}+o_m(\sqrt{m}).$
        \end{enumerate}
        Here, $C(\epsilon)$ depends on both $\epsilon, \delta$.
    \end{corollary}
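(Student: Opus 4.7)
The three items will be proved in sequence, each very short, using the preceding claim and classical asymptotic estimates for binomial tails.

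For item (1), the plan is to quote the preceding claim directly. That claim asserts $\binom{m}{\le r(m,\epsilon)}/2^m \ge 1 - H(\delta)/(1-\epsilon)$, which means $r(m,\epsilon)$ lies in the set over which the minimum defining $r^*(m,\epsilon)$ is taken. Hence $r^*(m,\epsilon)\le r(m,\epsilon)$ by minimality, with no further computation needed.

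For item (2), the plan is to use the fact that $\binom{m}{r^*(m,\epsilon)}/2^m \to 0$ as $m\to\infty$, which follows from the standard estimate $\binom{m}{\lfloor m/2\rfloor}/2^m = \Theta(1/\sqrt{m})$ (so in particular every individual term of the binomial distribution on $\{0,\ldots,m\}$ vanishes). By definition of $r^*$ we have $\binom{m}{\le r^*(m,\epsilon)-1}/2^m < 1-H(\delta)/(1-\epsilon) \le \binom{m}{\le r^*(m,\epsilon)}/2^m$, and the two outer quantities differ by $\binom{m}{r^*(m,\epsilon)}/2^m=o(1)$, so the squeeze gives the stated limit.

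For item (3), the plan is to invoke the central limit theorem. Write $\binom{m}{\le r}/2^m = \pp(S_m \le r)$ where $S_m \sim \mathrm{Bin}(m,1/2)$, so $(S_m - m/2)/(\sqrt{m}/2)$ converges in distribution to a standard Gaussian. Setting $r = m/2 + c\sqrt{m}$ and letting $\Phi$ denote the standard normal CDF yields $\binom{m}{\le r}/2^m \to \Phi(2c)$. Combining with item (2), we get $\Phi(2C(\epsilon)) = 1 - H(\delta)/(1-\epsilon)$, so $C(\epsilon) = \tfrac{1}{2}\Phi^{-1}\!\bigl(1-H(\delta)/(1-\epsilon)\bigr)$. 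The $o(\sqrt{m})$ correction is absorbed by noting that $\Phi^{-1}$ is continuous at the target value (which lies strictly in $(0,1)$ for $\epsilon$ small enough and $\delta \in (0,1/2)$), so any sequence $r_m = m/2 + c_m\sqrt{m}$ with $\binom{m}{\le r_m}/2^m \to 1-H(\delta)/(1-\epsilon)$ must satisfy $c_m \to C(\epsilon)$.

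The only mildly delicate point is verifying that $C(\epsilon)$ is well-defined, i.e., that $1 - H(\delta)/(1-\epsilon) \in (0,1)$; this requires $\epsilon < 1 - H(\delta)$, which is exactly the regime of interest since capacity is $1-H(\delta)$. No step presents a real obstacle: items (1) and (2) are tautological once the preceding claim and the $\Theta(1/\sqrt{m})$ binomial peak estimate are in hand, and item (3) is a direct application of the CLT to $\mathrm{Bin}(m,1/2)$.
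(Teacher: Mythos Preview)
Your proof is correct and follows essentially the same approach as the paper: item (1) is the preceding claim plus minimality, item (2) is the squeeze using $\binom{m}{r^*}/2^m\le\binom{m}{\lfloor m/2\rfloor}/2^m=O(1/\sqrt{m})$, and item (3) is the CLT for $\mathrm{Bin}(m,1/2)$. Your treatment of item (3) is in fact more explicit than the paper's --- you write down $C(\epsilon)=\tfrac12\Phi^{-1}(1-H(\delta)/(1-\epsilon))$ and justify the $o(\sqrt{m})$ term via continuity of $\Phi^{-1}$, whereas the paper simply observes that different constants $C_1\ne C_2$ in $r=m/2+C\sqrt{m}$ yield different limits for $\binom{m}{\le r}/2^m$.
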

    \begin{proof}
        The first property follows from the claim. To prove the second and third properties, note $\frac{\binom{m}{\leq r^*(m,\epsilon)}}{2^m} \geq 1-\frac{H(\delta)}{1-\epsilon} \geq \frac{\binom{m}{\leq r^*(m,\epsilon)-1}}{2^m}$ . Now, note that $0 \leq \frac{\binom{m}{\leq r^*(m,\epsilon)}}{2^m}-\frac{\binom{m}{\leq r^*(m,\epsilon)-1}}{2^m} \leq \frac{\binom{m}{m/2}}{2^m}=O_m(\frac{1}{\sqrt{m}})$, thus $\frac{\binom{m}{\leq r^*(m,\epsilon)}}{2^m}$ has a limit of $1-\frac{H(\delta)}{1-\epsilon}$ as it is within the $O_m(\frac{1}{\sqrt{m}})$ neighborhood of $1-\frac{H(\delta)}{1-\epsilon}$. The third property follows from the fact that $\frac{\binom{m}{\leq m/2+C_1\sqrt{m}}}{2^m}$ and $\frac{\binom{m}{\leq m/2+C_2\sqrt{m}}}{2^m}$ have different limits for $C_1 \neq C_2$.
    \end{proof}

    To proceed with the analysis, note that
    \[a_{m+1,r}\leq\left(1-\frac{\epsilon}{140}\right)a_{m, r}+\left(1+\frac{\epsilon}{140}\right)a_{m, r-1}\]
    as long as $r \leq r^*(m,\epsilon)$ and
    \[a_{m+1,r}\leq a_{m, r}+a_{m, r-1}\]
    for any $r \in [m]$, including $r > r^*(m,\epsilon)$.

    Now, we wish to keep track of coefficients of $a_{m-k,r}$ for different $r$ after applying the inequalities above to $a_{m,r}$ $k$ times. It helps to reformulate the inequalities as follows:
    \begin{align*}
        \frac{a_{m+1,r}}{2^{m+1}}\leq\left(\frac{1}{2}-\frac{\epsilon}{280}\right)\frac{a_{m, r}}{2^m}+\left(\frac{1}{2}+\frac{\epsilon}{280}\right)\frac{a_{m, r-1}}{2^m} \text{ if } r \leq r^*(\epsilon, m), \\
        \frac{a_{m+1,r}}{2^{m+1}}\leq\left(\frac{1}{2}\right)\frac{a_{m, r}}{2^m}+\left(\frac{1}{2}\right)\frac{a_{m, r-1}}{2^m} \text{ if } r > r^*(\epsilon, m).
    \end{align*}
    Note that in both inequalities, the coefficients add up to 1. So, we are able to view the coefficients as probabilities of a certain stochastic process. This is formalized by defining a discrete-time submartingale. At time 0, one can define the initial value as $\frac{a_{m, r}}{2^m}$. At time 1, one can define the value as $\frac{a_{m-1, r-\xi}}{2^{m-1}}$, where $\xi \sim \mathrm{Ber}\left(\frac{1}{2}\right)$ if $r > r^*(m-1, \epsilon)$ and $\xi \sim \mathrm{Ber}\left(\frac{1+\epsilon/140}{2}\right)$ if $r \leq r^*(m-1, \epsilon)$. This submartingale is further defined in the Definition below.

    \begin{definition}\label{stoc_pr}
        Let $m \in \mN,\,\delta\in(0,\frac{1}{2})$ be fixed parameters and $\epsilon, \omega \in (0,1)$ be varying parameters.
        Let $\Delta:\mN \times(0,1)\rightarrow (0,1)$ be a function of $m \in \mN$ and $\epsilon \in (0,1)$ such that $a_{m+1,r}\leq\left(1-\Delta_m(\epsilon)\right)a_{m, r}+\left(1+\Delta_m(\epsilon)\right)a_{m, r-1}$ for all $m \in \mN,\,\epsilon\in(0,1)$ and $r \leq r^*(m,\epsilon)$. Finally, let $C(\epsilon)$ be the constant defined in Corollary \ref{r_star}.
        The following stochastic process representing the entropy of the code normalized by the length of the codeword is defined:
        \[\kappa^{(m)}_{k}:=\frac{a_{m-k,\left\lfloor m/2+(C(\epsilon)-\omega)\sqrt{m}\right\rfloor+(\zeta^{(m)}_{k}-k)/2}}{2^{m-k}},\]
        where $\zeta_{k}^{(m)}=\sum_{i=1}^{k} \xi_i^{(m)}, \,\, \zeta_0^{(m)}=0,\,\, \Big(\xi_i^{(m)}\,\Big|\,\zeta_{i-1}^{(m)}=t\Big) \sim 2\mathrm{Ber}\Big(\frac{1}{2}\Big)-1 \text{ if } t\geq \frac{\omega \sqrt{m}}{2}$ and $2\mathrm{Ber}\Big(\frac{1-\Delta_m(\epsilon)}{2}\Big)-1 \text{ if } t<\frac{\omega \sqrt{m}}{2}$.
    \end{definition}
    Note that we have proven that $\Delta_m(\epsilon)=\frac{\epsilon}{140}$ satisfy the restrictions of the definition \ref{stoc_pr}.
    
    \begin{property}
        Let $m \in \mN$ be a varying parameter. $\kappa_i^{(m)} \leq \E\Big(\kappa^{(m)}_{i+1}\,\Big|\,\zeta^{(m)}_i\Big)\,\, \text{ when } i<cm$ for a small enough constant $c$ and a large enough $m$.
    \end{property}
    \begin{proof}
        
        Note that if $\zeta^{(m)}_i=t$ then $\kappa_i^{(m)}=\frac{a_{m-i,\left\lfloor\frac{m}{2}+(C(\epsilon)-\omega)\sqrt{m}\right\rfloor+\frac{t-i}{2}}}{2^{m-i}}$. If $c$ is small enough and $m$ is large enough,
        \[\left\lfloor\frac{m}{2}+(C(\epsilon)-\omega)\sqrt{m}\right\rfloor+\frac{\frac{\omega \sqrt{m}}{2}-i}{2} \leq \frac{m-i}{2}+\left(C(\epsilon)-\frac{3\omega}{4}\right)\sqrt{m} < r_{\mathrm{cap}}(m-i,\epsilon).\]
        Thus, if $t<\frac{\omega \sqrt{m}}{2}$, the following is true:
        
        \begin{align*}
        &(\kappa_i^{(m)}\mid \zeta^{(m)}_i=t)\\
         & \leq \left(\frac{1-\Delta_m}{2}\right)\Big(\kappa_{i+1}^{(m)}\,\Big|\,\zeta^{(m)}_{i+1}=t+1\Big)+\left(\frac{1+\Delta_m}{2}\right)\Big(\kappa_{i+1}^{(m)}\,\Big|\,\zeta^{(m)}_{i+1}=t-1\Big)\\
        &= \pp\Big(\xi^{(m)}_{i+1}=1\,\Big|\,\zeta^{(m)}_i=t\Big)\Big(\kappa_{i+1}^{(m)}\,\Big|\,\zeta^{(m)}_{i+1}=t+1\Big)\\
        &+\pp\Big(\xi^{(m)}_{i+1}=-1\,\Big|\,\zeta^{(m)}_i=t\Big)\Big(\kappa_{i+1}^{(m)}\,\Big|\,\zeta^{(m)}_{i+1}=t-1\Big)\\
        &=\pp\Big(\zeta^{(m)}_{i+1}=t+1\,\Big|\,\zeta^{(m)}_i=t\Big)\Big(\kappa_{i+1}^{(m)}\,\Big|\,\zeta^{(m)}_{i+1}=t+1\Big)\\
        &+\pp\Big(\zeta^{(m)}_{i+1}=t-1\,\Big|\,\zeta^{(m)}_i=t\Big)\Big(\kappa_{i+1}^{(m)}\,\Big|\,\zeta^{(m)}_{i+1}=t-1\Big)\\
        &=\E(\kappa^{(m)}_{i+1}\mid \zeta^{(m)}_i=t)
        \end{align*}
        If $t>\frac{\omega \sqrt{m}}{2}$, a similar argument can be established using a weaker recurrence property.
    \end{proof}

    \begin{corollary}
        Let $m \in \mN$ be a varying parameter. $\kappa_0^{(m)} \leq \E\Big(\kappa_{h(m)}^{(m)}\Big)$ as long as $h(m)=o_m(m)$ and $m$ is sufficiently large.
    \end{corollary}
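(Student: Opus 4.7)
The plan is to recognize that the preceding property exhibits $\{\kappa_i^{(m)}\}_{i\ge 0}$ as a submartingale (up to the range where the property applies) with respect to the natural filtration generated by $\{\zeta_i^{(m)}\}$, and then iterate.

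First, since $h(m)=o(m)$, there exists $m_0$ such that for all $m\ge m_0$ we have $h(m)\le cm$, where $c>0$ is the constant from the previous property. In particular, the inequality $\kappa_i^{(m)}\le \E(\kappa_{i+1}^{(m)}\mid \zeta_i^{(m)})$ holds for every $i\in\{0,1,\dots,h(m)-1\}$.

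Next, taking an unconditional expectation on both sides and using the tower rule,
\[
\E(\kappa_i^{(m)}) \;\le\; \E\!\left(\E(\kappa_{i+1}^{(m)}\mid \zeta_i^{(m)})\right) \;=\; \E(\kappa_{i+1}^{(m)}),
\]
so the sequence $i\mapsto \E(\kappa_i^{(m)})$ is nondecreasing on $\{0,\dots,h(m)\}$. Iterating this inequality from $i=0$ to $i=h(m)-1$ gives
\[
\E(\kappa_0^{(m)}) \;\le\; \E(\kappa_{h(m)}^{(m)}).
\]
Finally, since $\zeta_0^{(m)}=0$ is deterministic, $\kappa_0^{(m)}$ is itself a constant, so $\E(\kappa_0^{(m)})=\kappa_0^{(m)}$, which yields the desired inequality $\kappa_0^{(m)}\le \E(\kappa_{h(m)}^{(m)})$.

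There is no real obstacle here: the corollary is essentially a restatement of the submartingale property of the previous claim, together with the observation that the admissible range $i<cm$ eventually contains $h(m)$ whenever $h(m)=o(m)$. The only mild subtlety is ensuring that the conditional inequality is applied along a consistent filtration so that the tower-rule chaining is valid; this is straightforward because $\kappa_i^{(m)}$ is a deterministic function of $\zeta_i^{(m)}$ by construction, so conditioning on $\zeta_i^{(m)}$ is equivalent to conditioning on the $\sigma$-algebra generated by $\zeta_1^{(m)},\dots,\zeta_i^{(m)}$.
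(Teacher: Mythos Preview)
Your proof is correct and matches the paper's (implicit) approach: the paper states the corollary without proof, treating it as an immediate consequence of the submartingale property just established, and your argument spells out exactly that routine iteration via the tower rule. The only superfluous part is your final remark about filtrations; since each step only uses $\E[\E(\kappa_{i+1}^{(m)}\mid\zeta_i^{(m)})]=\E[\kappa_{i+1}^{(m)}]$, no filtration consistency issue arises at all.
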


    The last corollary allows us to compare the coefficients of $a_{m-h(m),r}$ based on the behavior of $\zeta_{h(m)}^{(m)}$. 
    Before we proceed, we need the following two theorems, the first is used to prove Theorem \ref{rmdef}, and the second is used in Remark \ref{remark}.
    \begin{theorem}[Hoeffding]\label{hoeffding}
         Let $n \in \mN$. Let $\xi_1, \xi_2 \ldots \xi_n$ be independent random variables satisfying $a_i \leq \xi_i \leq b_i,\, i \in [n]$. Define $S_n=\sum_{i=1}^n \xi_i$. The following inequalities hold for any $t>0$:
        \begin{align*}
        &\pp(S_n-\E S_n \geq t) \leq e^{\frac{-2t^2}{\sum_{i=1}^n (b_i-a_i)^2}},\\
        &\pp(S_n-\E S_n \leq -t) \leq e^{\frac{-2t^2}{\sum_{i=1}^n (b_i-a_i)^2}}. 
        \end{align*}
        
    \end{theorem}

    \begin{theorem}\label{rmdef}
        Let $\delta\in(0,\frac{1}{2}), \omega \in (0,1)$ be the fixed parameters and $\epsilon \in (0,1),\,m \in \mN$ - varying parameters. The parameter $\Tilde{r}(m,\epsilon,\omega)=\left\lfloor\frac{m}{2}+(C(\epsilon)-\omega)\sqrt{m}\right\rfloor$ satisfies $a_{m,\Tilde{r}}=2^{-\Omega_m(\log(1-\Delta_m(\epsilon))\sqrt{m})}2^m$, where $C(\epsilon)$ is defined in Corollary \ref{r_star}.
    \end{theorem}

    \begin{proof}
        Let $q_m=\frac{1-\Delta_m}{2}$. First, we analyze the behavior of $\zeta^{(m)}_{h(m)}$. Consider $\zeta_i'$ - a sum of $i$ $\mathrm{Rad}\Big(\frac{1-\Delta_m}{2}\Big)$ independent random variables, which constitutes a transient random walk (a random walk with the probability to return to the starting position in finite time smaller than 1). Define $p=\pp\Big(\exists i : \zeta_i'=1\Big)$. The following holds:
        \begin{align*}
        p \leq \sum_{i=0}^{+\infty} \pp(\zeta_{2i+1}'=1) \leq \sum_{i=0}^{+\infty}(2i+1)q_m^{i+1}=\frac{q_m}{1-q_m}+\frac{2q_m^2}{(1-q_m)^2}.
        \end{align*}
        Consequently, $p=O(q_m)$. This implies that the probability of the event $\exists i: \zeta_i'\geq \frac{\omega\sqrt{m}}{2}$ is at most $p^{\frac{\omega\sqrt{m}}{2}}=e^{-\Omega_m(\log(q_m)\sqrt{m})}$. Note that $\pp\Big(\forall i: \zeta^{(m)}_i <  \frac{\omega\sqrt{m}}{2}\Big)=\pp\Big(\forall i: \zeta'_i < \frac{\omega\sqrt{m}}{2}\Big)$. More concretely, consider the stopping times
        \begin{align*}
            &\tau=\min\Big(\Big\{i \in [h(m)]\,\Big|\, \forall j<i: \zeta^{(m)}_j<\frac{\omega \sqrt{m}}{2}; \zeta^{(m)}_i\geq \frac{\omega \sqrt{m}}{2}\Big\}\cup\{+\infty\}\Big),\\
            &\tau'=\min\Big(\Big\{i \in [h(m)]\,\Big|\, \forall j<i: \zeta'_j<\frac{\omega \sqrt{m}}{2}; \zeta'_i\geq \frac{\omega \sqrt{m}}{2}\Big\}\cup \{+\infty\}\Big).
        \end{align*}
        Consider the processes $\zeta_{1,i}=\zeta^{(m)}_{\min(\tau,i)},\zeta_{2,i}=\zeta'_{\min(\tau',i)}$. Note that $\zeta_{1,i}$ and $\zeta_{2,i}$ are processes with the same distributions, as they are both asymmetric random walks with a ceiling at $\frac{\lceil \omega \sqrt{m}\rceil}{2}$ and the same probability parameter.
        Finally, because
        \begin{align*}
            &\pp\Big(\exists i: \zeta^{(m)}_i \geq \frac{\omega\sqrt{m}}{2}\Big)=\pp\Big(\exists N \,\,\forall i>N: \zeta_{1,i}\geq \frac{\omega \sqrt{m}}{2}\Big)\\
            &=\pp\Big(\exists N \,\, \forall i>N: \zeta_{2,i}\geq \frac{\omega \sqrt{m}}{2}\Big)=\pp\Big(\exists i: \zeta'_i \geq  \frac{\omega\sqrt{m}}{2}\Big)\le p^{\frac{\omega\sqrt{m}}{2}}
        \end{align*}
        Draw $\Tilde{\zeta}_i$ from the probability distribution of $\zeta_i$ conditioned on $\max_{i \in [h(m)]} \zeta_i \leq \frac{\omega\sqrt{m}}{2}$ - a random walk with a ceiling. Note that there is a coupling between $\zeta_i'$ and $\Tilde{\zeta}_i$ observing that the probability distribution of $\Tilde{\zeta}_i$ is also the probability distribution of $\zeta_i'$ conditioned on $\max_{i \in [h(m)]} \zeta'_i \leq \frac{\omega\sqrt{m}}{2}$. This, in turn, shows that
        $\pp\big(\zeta'_i \geq t\big) \geq  \pp\big(\Tilde{\zeta}_i \geq t\big)$ for all $t$ and $i$.

        Note that $\E \zeta'_{h(m)}=\Delta_m h(m)$, as such the Hoeffding inequality implies
        \[\pp\Big(\Tilde{\zeta}_{h(m)} \geq \frac{-h(m)\Delta_m }{2}\Big) \leq \pp\Big(\zeta'_{h(m)} \geq \frac{-h(m)\Delta_m}{2}\Big) \leq e^{-\frac{h(m)\Delta_m^2}{2}}=e^{-\Omega_m(h(m))}.\]
        Now we will establish bounds on $\E\Big(\kappa_{h(m)}^{(m)}\Big)$.

        Consider the following expression:
        \begin{align*}
        &\E\Big(\kappa_{h(m)}^{(m)}\Big)=\E\Big(\kappa_{h(m)}^{(m)}\,\Big|\,\max_i \zeta_i>\frac{\omega \sqrt{m}}{2}\Big)\pp\Big(\max_i \zeta_i>\frac{\omega \sqrt{m}}{2}\Big)\\
        &+\E\Big(\kappa_{h(m)}^{(m)}\,\Big|\,-\frac{h(m)\Delta_m}{2} < \zeta_{h(m)}, \max_i \zeta_i\leq \frac{\omega \sqrt{m}}{2}\Big)\\
        &\cdot\pp\Big(-\frac{h(m)\Delta_m}{2} < \zeta_{h(m)}, \max_i \zeta_i\leq \frac{\omega \sqrt{m}}{2}\Big)\\
        &+\E\Big(\kappa_{h(m)}^{(m)}\,\Big|\,\zeta_{h(m)}\leq -\frac{h(m)\Delta_m}{2},\max_i \zeta_i\leq \frac{\omega \sqrt{m}}{2}\Big)\\
        &\cdot\pp\Big(\zeta_{h(m)}\leq -\frac{h(m)\Delta_m}{2},\max_i \zeta_i\leq \frac{\omega \sqrt{m}}{2}\Big).
        \end{align*}
        As $\kappa_{h(m)}^{(m)}<1$, the first two summands are bounded by
        \begin{align*}
        &\pp(\max_i \zeta_i>\frac{\omega \sqrt{m}}{2})+\pp(-\frac{h(m)\Delta_m}{2} < \zeta_{h(m)}, \max_i \zeta_i \leq \frac{\omega \sqrt{m}}{2})\\
        &=\pp(\max_i \zeta_i>\frac{\omega \sqrt{m}}{2})+\pp(\frac{h(m)\Delta_m}{2} < \zeta_{h(m)}\mid \max_i \zeta_i \leq \frac{\omega \sqrt{m}}{2})\pp(\max_i \zeta_i \leq \frac{\omega \sqrt{m}}{2}) \\
        &\leq \pp(\max_i \zeta_i>\frac{\omega \sqrt{m}}{2})+\pp(-\frac{h(m)\Delta_m}{2} < \zeta_{h(m)}\mid \max_i \zeta_i \leq \frac{\omega \sqrt{m}}{2}) \\
        &\leq p^{\frac{\omega \sqrt{m}}{2}}+e^{-\frac{h(m)\Delta_m^2}{2}}=e^{-\Omega_m(\log(q_m)\sqrt{m})}+e^{-\Omega_m(h(m))}.
        \end{align*}
        The third term is bounded using the inequality
        \begin{align*}
        &a_{m-h(m),\left\lfloor m/2+(C(\epsilon)-\omega)\sqrt{m}\right\rfloor+(\zeta^{(m)}_{h(m)}-h(m))/2} \\
        &\leq \binom{m-h(m)}{\leq \left\lfloor\frac{m}{2}+(C(\epsilon)-\omega)\sqrt{m}\right\rfloor+\frac{\zeta^{(m)}_{h(m)}-h(m)}{2}}.
        \end{align*}
        Conditioned on $\zeta_{h(m)}\leq \frac{-h(m)\Delta_m}{2}$, the upper bound for the binomial coefficient is
        \[\binom{m-h(m)}{\leq \left\lfloor\frac{m}{2}+(C(\epsilon)-\omega)\sqrt{m}\right\rfloor-\left(\frac{1}{2}+\frac{\Delta_m}{4}\right)h(m)} \leq 2^{m-h(m)}e^{\frac{-\Omega_m(h(m))^2}{m-h(m)}}.\]
        if $h(m)=\omega(\sqrt{m})$.
        The upper bound is obtained from Hoeffding's inequality applied to a sum of $m-h(m)$ iid $Ber(1/2)$ random variables, where $t=\frac{m-h(m)}{2}-r$ with $r=\Omega_m(h(m))$ if $\lim_{m \rightarrow +\infty}\frac{h(m)}{\sqrt{m}}=+\infty$. Altogether, $\E \kappa_{h(m)}^{(m)} \leq e^{-\Omega_m(\log(q_m)\sqrt{m})}+e^{-\Omega_m(h(m))}+e^{-\Omega_m(\frac{h(m)^2}{m})}$, giving us the bound of $e^{-\Omega_m(\log(q_m)\sqrt{m})}$ for $h(m)=\lceil m^{\frac{3}{4}}\sqrt{\log(q_m)} \rceil$. Due to Corollary 7.4, \[\kappa_0^{(m)}=\frac{a_{m,\lfloor m/2+(C(\epsilon)-\omega)\sqrt{m} \rfloor}}{2^m} \leq \E \kappa_{h(m)}^{(m)} \leq e^{-\Omega_m(\log(q_m)\sqrt{m})}\]
        for sufficiently large $m$. Finally, one can set $\tilde{r}(m,\epsilon, \omega)=\left\lfloor m/2+(C(\epsilon)-\omega)\sqrt{m}\right\rfloor$.

    \end{proof}
    \begin{theorem}[Doob]
         Let $n \in \mN$. Let $X_1, X_2 \ldots X_n$ be a discrete-time submartingale with respect to its natural filtration. The following inequality holds: 
        \[\pp(\max_{i \in [n]} X_i \geq C) \leq \frac{\E \max(X_n,0)}{C}.\]
    \end{theorem}

    \begin{remark}\label{remark}
        Let $m \in \mN$ be a varying parameter. We provide an argument here that the bound on $a_{m,\Tilde{r}}$ cannot be improved with the same restrictions using the same approach. Note that
        \[\pp\left(\zeta_{\left\lceil\left(\frac{\omega}{2}+2\right)\sqrt{m}\right\rceil}=\left\lceil\left(\frac{\omega}{2}+2\right)\sqrt{m}\right\rceil\right)=e^{-\theta_m(\sqrt{m})}.\]
        Consider a symmetric random walk, $S_t$ with $S_0=0$. As $S_t$ is a martingale, Doob's inequality can be used. It gives
        \[\pp(\max_{i \in [h(m)-\left\lceil\left(\frac{\omega}{2}+2\right)\sqrt{m}\right\rceil]}S_i \geq 2\sqrt{m}) \leq \frac{\E \max(S_{h(m)-\left\lceil\left(\frac{\omega}{2}+2\right)\sqrt{m}\right\rceil},0)}{2\sqrt{m}}.\]
        Note that $(\E \max(S_t,0))^2 \leq (\E |S_t|)^2 \leq Var(S_t)=\sum_i Var(\xi_i)=t$. As such,
        \[\pp(\max_{i \in [h(m)-\left\lceil\left(\frac{\omega}{2}+2\right)\sqrt{m}\right\rceil]}S_i \geq 2\sqrt{m}) \leq \frac{\E \max(S_{h(m)-\left\lceil\left(\frac{\omega}{2}+2\right)\sqrt{m}\right\rceil},0)}{2\sqrt{m}} \leq \frac{1}{2}.\]

        So, with $e^{-\theta_m(\sqrt{m})}$ probability, $\zeta^{(m)}_i$ passes the $\frac{\omega \sqrt{m}}{2}$ threshold and stays above it, therefore $\zeta^{(m)}_{h(m)} \geq \frac{\omega \sqrt{m}}{2} $ with probability at least $e^{-\theta_m(\sqrt{m})}$. But conditioned on $\zeta^{(m)}_{h(m)} \geq \frac{\omega \sqrt{m}}{2}$, the trivial upper bound for $\kappa^{(m)}_{h(m)}$ is $\theta_m(1)$, which is not enough to improve upon the $e^{-\Omega_m(\sqrt{m})}$ threshold.
    \end{remark}
    
    \begin{corollary}
        Let $\delta\in(0,\frac{1}{2})$ be a fixed parameter and $m \in \mN,\,\epsilon, \omega \in (0,1)$ - varying parameters. The parameter $\Tilde{r}(m,\epsilon,\omega)=\left\lfloor\frac{m}{2}+(C(\epsilon)-\omega)\sqrt{m}\right\rfloor$ satisfies $a_{m,\Tilde{r}}=2^{m-\Omega_m(\sqrt{m})}$.
    \end{corollary}
\subsection{Proof of Theorem 3.1 (3): A formal algorithm to link the entropy bound to bit-error probability}\label{Sec7}\label{Sec7.2}
    For this section, introduce the following notation. Denote $W \subseteq [n],\, A \in \F_2^{n}$
    \begin{align*}
    &A_W=\mathrm{proj}_{[n],W}(A),\, w_W(A)=|\{i\in W\mid A_i=1\}|,\, w(A)=w_{[n]}(A).
    \end{align*}
    Note that $w(\cdot)$ is the Hamming weight. First, we prove the list decoding property.
    \begin{lemma}\label{listdec}
        Let $\mathcal{C} \subseteq \F_2^n$ be a linear code, $c\in (1,+\infty),\,\delta \in [0, \frac{1}{2})$. Consider $X \sim Unif(\mathcal{C}),\,Z \sim Ber(\delta)^{\F_2^m},\, Y=X+Z$. One can construct a list of codeword candidates $L_Y$ of cardinality $2^{cH(X \mid Y)}$ which, with probability at least $\left(1-\sqrt{\frac{1}{c}}\right)^2$, contains the true codeword.
    \end{lemma}
    \begin{proof}
        Let $y \in \F_2^n$ be an instance of $Y$. Define the following:
        \begin{itemize}
            \item $p=2^{-cH(X \mid Y)}$ is the probability threshold parameter,
            \item $V_y=(X\mid Y=y)$ is the random variable defined by the probability distribution $\forall x \in \mathcal{C}: \pp(V_y=x)=\pp(X=x \mid Y=y)$,
            \item $S_y=\{v \in \mathcal{C} \mid \pp(V_y=v)>p\}$ is a set defined by $y$ representing the most likely values of $X$ which appear with probability above the threshold $p$,
            \item $\xi_y=H(V_y)$ is the entropy of $V_y$ depending on the instance $y$,
            \item $A_y=\{\xi_y<cH(X|Y)\}$ represents an event that the entropy $H(X \mid Y=y)$ is bounded by $c\E_yH(X \mid Y=y)=cH(X|Y)$, 
            \item $B_y=\{V_y \in S_y\}$ is the event that observing $y$, the codeword $X$ is in the set $S_y$ of most likely decoding candidates.
            \item $S_y'=(S_y\mid A_y)$ is the random variable that represents the set $S_y$ when $A_y$ is true. $S_y'$ is defined by the probability distribution $\forall S \in 2^{\mathcal{C}}:\pp(S_y'=S)=\pp(S_y=S \mid A_y)$,
            \item $V_y'=(V_y \mid A_y)$ is the random variable that represents the set $V_y$ when $A_y$ is true. $V_y'$ is defined by the probability distribution $\forall x \in \mathcal{C}:\pp(V_y'=x)=\pp(V_y=x \mid A_y)$.
        \end{itemize}
We split the proof into 4 major parts.

        \textbf{Objective}.
        We can construct the list $L$ if: 
        \begin{itemize}
            \item Both events $A_y$ and $B_y$ hold true. This happens with probability $\pp(A,B)$. For the Theorem to hold true, it is sufficient to prove $\pp(A,B)=\pp(A)\pp(B\mid A)\geq \left(1-\sqrt{\frac{1}{c}}\right)^2.$
            \item $|S_A| \leq 2^{cH(X|Y)}$. Then, the list $L$ can be compiled by collecting the $2^{cH(X|Y)}$ most likely codewords.
        \end{itemize}
        
        \textbf{Bounding $\pp(A_y)$}. This part is used to compute the lower bound for $\pp(A_y,B_y)$. The Markov inequality implies:
        \[\pp(\xi_y \geq t\E_y\xi_y)=\pp(\xi_y \geq tH(X \mid Y)) \leq \frac{1}{t}.\]
        Taking $t=\sqrt{c}$, we conclude that 
        \[\pp\left(\xi_y < H(X \mid Y)\sqrt{c}\right) \geq 1-\sqrt{\frac{1}{c}}.\]

        \textbf{Bounding $\pp(B_y\mid A_y)$}. This part both establishes a good lower bound on $\pp(A_y,B_y)$, as well as shows an upper bound on $|S_y|$. Consider the following statement:

        \textsl{Let V' be a $\mathcal{V}$-valued random variable, $p>0$, $S'=\{v \in \mathcal{V}\mid \pp(V'=v)>p\}$. Then $|S'|\le \frac{1}{p}$ and $H(V')\geq (1-\pp(V' \in S'))\log_2 \frac{1}{p}$}

        \textsl{Proof:}
        \begin{enumerate}
            \item $1 \geq \sum_{v \in S'}\pp(V'=v) \ge p|S'|$
            \item $H(V')=-\E \log_2 \pp(V'=v) \geq -\E \log_2 \pp(V'=v) \mathbbm{1}_{v \notin S'} \geq \pp(V' \notin S') \log_2 \frac{1}{p}$
        \end{enumerate}
        Consider the statement for the random variable $V'=V_y'$, for which $S'=S_y'$. The following is implied:

        \begin{itemize}
            \item$|S_y'|\le 1/p=2^{cH(X \mid Y)},$
            \item$H(X \mid Y)\sqrt{c} \geq H(V_y') \geq (1-\pp(V_y' \in S_y'))H(X \mid Y)c$
            \item Therefore, $\pp(B_y|A_y)=\pp(V_y' \in S_y') \geq 1-\sqrt{\frac{1}{c}}.$
        \end{itemize}

    \textbf{Conclusion}. Given event $A_y$, we have constructed a set $S_y'$ satisfying $|S_y'| \leq 2^{cH(X \mid Y)}$. Assuming that the event $A_y$ holds, $S_y'$ contains the true codeword if and only if the event $B_y$ holds. Note: $\pp(A_y, B_y)=\pp(A_y)\pp(B_y|A_y)\geq\left(1-\sqrt{\frac{1}{c}}\right)^2.$
    We have satisfied both conditions and thus have proven the Theorem.
    \end{proof}
    \begin{corollary}\label{list}
        Let $m \in \mN,\, \epsilon>0$ be varying parameters, $c>0,\,\delta \in (0, \frac{1}{2}), \omega>0$. Assume that the noisy $RM(m,\Tilde{r}(m,\epsilon, \omega))$ codeword is considered, where $\Tilde{r}(m,\epsilon, \omega)$ is defined in Theorem \ref{rmdef}. One can construct a list of codeword candidates $L$ of cardinality $2^{a_{m,\Tilde{r}}2^{c(1-\Delta_m(\epsilon))\sqrt{m}}}$ which, with probability $1-2^{-\Omega_m((1-\Delta_m(\epsilon))\sqrt{m})}$, contains the true codeword.
    \end{corollary}
    We need the following property in this section.
    \begin{property}
     Let $n\in \mN,\,i \in [n]$. Let $P_{\mathrm{bit},i}:=\pp(\widehat{X_i}(Y) \neq X_i)$, where $\widehat{X_i}$ denotes the most likely value of $X_i$ given $Y$. Reed-Muller codes' associated bit-error probabilities satisfy  the following: $\forall S\in [n]: P_{\mathrm{bit},i}=P_{\mathrm{bit}}$.
    \end{property}
    
    Let $\delta \in \left(0,\frac{1}{2}\right)$.
    We introduce and analyze a formal decoding algorithm in this section.\\

    \fbox{\begin{minipage}{33em}
    \textbf{Algorithm:}
    \begin{enumerate}
    \item Choose $\delta'>\delta$ such that the rate of the code is less than $1-H(\delta')$.

    \item Call the noisy codeword $Y'$.

    \item Define a set $S$ that contains each element of $\mathbb{F}_2^m$ with probability $\gamma=\frac{2(\delta'-\delta)}{1-2\delta}$.

    \item Create a new corrupted codeword $Y''$ by setting $Y''_i=Y'_i$ for $i\not\in S$ and setting $Y''_i \sim Ber(1/2)$ randomly for $i\in S$.

    \item Use the new corrupted codeword to make a list of $2^{e^{-\Omega_m(\sqrt{m})} 2^{m}}$ codewords that almost certainly contains the true codeword.

    \item Return the codeword from the list that agrees with the original corrupted codeword on the most bits in $S$.
    \end{enumerate}
    \end{minipage}}\\
    \vspace{0.3pt}\\
    The proof of the main theorem relies on the following important property:
\begin{property}
    Let $m \in \mN,\,\gamma,\delta \in (0,\frac{1}{2})$. Consider two error vectors $X \sim Ber(\gamma)^{n},\, Y \sim Ber(\delta)^{n}$. Then, the following holds true:
    $X+Y \sim Ber(\delta+\gamma-2\gamma\delta)^{n}$.
    Moreover, for any $z \in \F_2^{n},$ the random bits $\{X_i\mid X+Y=z\}_{i \in [n]}$ are mutually independent. Finally, $\pp(X_i=1 \mid X+Y=z) \geq C_{\delta, \gamma}'$, where $C'_{\delta,\gamma}>0$ is independent of $m$ and $z$.  
\end{property}
\begin{proof}
    Consider the realization $X+Y=\1_U,\, U \subseteq [n],\,W \subseteq [n]$. Note that
    \begin{align*}
    &\pp(X_W=\1, X+Y=\1_U)=\sum_{S_1 \subseteq [n]\setminus W}\pp(X=\1_{S_1 \cup W}, Y = \1_{U \Delta (S_1 \cup W)})\\
    &=\sum_{S_1 \subseteq [n]\setminus W}\pp(X=\1_{S_1 \cup W}, Y = \1_{(U \Delta S_1)\cup (W \setminus U) \setminus (W \cap U)})\\
    &=\sum_{S_1 \subseteq [n]\setminus W} \Big(\gamma^{|S_1|+ |W|}(1-\gamma)^{n-|S_1|-|W|} \\
    &\cdot\delta^{|U \Delta S_1|+|W \setminus U|-|W \cap U|} (1-\delta)^{n-(|U \Delta S_1|+|W \setminus U|-|W \cap U|)}\Big)\\
    &=\left(\frac{\gamma(1-\delta)}{\delta(1-\gamma)}\right)^{|W \cap U|}\left(\frac{\gamma \delta}{(1-\gamma)(1-\delta)}\right)^{|W \setminus U|}\\
    &\cdot\sum_{S_1 \subseteq [n]\setminus W} \gamma^{|S_1|}(1-\gamma)^{n-|S_1|}\delta^{|U \Delta S_1|}(1-\delta)^{n-|U \Delta S_1|};\end{align*}
    \begin{align*}
    &\pp(X+Y = \1_U)=\sum_{S_1 \subseteq [n]}\pp(X=\1_{S_1},Y=\1_{U \Delta S_1})\\
    &=\sum_{S_1 \subseteq [n]} \gamma^{|S_1|}(1-\gamma)^{n-|S_1|}\delta^{|U \Delta S_1|}(1-\delta)^{n-|U \Delta S_1|},
    \end{align*}
    for all $W$ and $U$. Note that $\sum_{S_1 \subseteq [n]\setminus W} \gamma^{|S_1|}(1-\gamma)^{n-|S_1|}\delta^{|U \Delta S_1|}(1-\delta)^{n-|U \Delta S_1|}$ and the same sum over $[n]\setminus W \cup \{u\}$ differ by a factor of $1+\frac{\gamma \delta}{(1-\gamma)(1-\delta)}$ if $u \in U^c$, as $|S_1 \cup \{u\}|=|S_1|+1$ and $ |U\Delta(S_1 \cup \{u\})|=|U\Delta S_1|+1$. Analogously, the sums differ by a factor of $1+\frac{\gamma(1-\delta)}{\delta(1-\gamma)}$ if $u \in U$. As such,
    \begin{align*}
    &\pp(X_W=\1 \mid X+Y = \1_U)\\
    &=\left(\frac{\gamma(1-\delta)}{\gamma(1-\delta)+\delta(1-\gamma)}\right)^{|W \cap U|}\left(\frac{\gamma \delta}{\gamma \delta+(1-\gamma)(1-\delta)}\right)^{|W \setminus U|}.
    \end{align*}
    Consequently, 
    \begin{align*}
    &\pp\big(X_i=1 \mid  X+Y = \1_U\big)\\
    &=\1_{i \in U} \frac{\gamma(1-\delta)}{\gamma(1-\delta)+\delta(1-\gamma)} + \1_{i \in U^c} \frac{\gamma \delta}{\gamma \delta+(1-\gamma)(1-\delta)}.
    \end{align*}
    
    One can use $W=[n]$ to show:
    \[\pp\big(X=\1\,\big|\,X+Y= \1_U\big)=\left(\frac{\gamma(1-\delta)}{\gamma(1-\delta)+\delta(1-\gamma)}\right)^{|U|}\left(\frac{\gamma \delta}{\gamma \delta+(1-\gamma)(1-\delta)}\right)^{n-|U|}.\]
    The following is true:
    \begin{align*}
     &\pp(X=\1_W, X+Y=\1_U)=\pp(X=\1_W, Y=\1_{U\Delta W})\\
     &=\gamma^{|W|}(1-\gamma)^{n-|W|}\delta^{|U \Delta W|}(1-\delta)^{n-|U \Delta W|}.
    \end{align*}
    As such, the previous equality and this relation for $W=[n]$ imply that
    \begin{align*}
        &\pp\big(X+Y=\1_U\big)=\frac{\pp\big(X=\1, X+Y=\1_U\big)}{\pp\big(X=\1\mid X+Y=\1_U\big)}\\
        &=\gamma^n \delta^{n-|U|}(1-\delta)^{|U|}\Big/\left(\left(\frac{\gamma(1-\delta)}{\gamma(1-\delta)+\delta(1-\gamma)}\right)^{|U|}\left(\frac{\gamma \delta}{\gamma \delta+(1-\gamma)(1-\delta)}\right)^{n-|U|}\right)\\
        &=(\gamma(1-\delta)+\delta(1-\gamma))^{|U|}(\gamma\delta+(1-\gamma)(1-\delta))^{n-|U|}\\
        &=(\gamma+\delta-2\gamma\delta)^{|U|}(1-(\gamma+\delta-2\gamma\delta))^{n-|U|}.
    \end{align*}
    Thus, $X+Y \sim Ber(\delta+\gamma-2\delta\gamma)$. Combining all these relations, we obtain:
    \begin{align*}
    &\pp\big(X=\1_W\big|X+Y=\1_U\big)
    =\frac{\pp(X=\1_W,X+Y=\1_U)}{\pp(X+Y=\1_U)}\\
    &=\frac{\gamma^{|W|}(1-\gamma)^{n-|W|}\delta^{|U \Delta W|}(1-\delta)^{n-|U \Delta W|}}{(\gamma(1-\delta)+\delta(1-\gamma))^{|U|}(\gamma\delta+(1-\gamma)(1-\delta))^{n-|U|}}\\
    &=\left(\frac{\gamma(1-\delta)}{\gamma(1-\delta)+\delta(1-\gamma)}\right)^{|W\cap U|} \left(\frac{\gamma\delta}{\gamma\delta+(1-\gamma)(1-\delta)}\right)^{|W\setminus U|}\\
    &\cdot \left(\frac{\delta(1-\gamma)}{\gamma(1-\delta)+\delta(1-\gamma)}\right)^{|W^c \cap U|}\left(\frac{(1-\gamma)(1-\delta)}{\gamma\delta+(1-\gamma)(1-\delta)}\right)^{|W^c \setminus U|}.
    \end{align*}
    This relation implies the mutual independence of bits $\{X_i|X+Y=z\}_{i \in [n]}$. Finally, $\pp(X_i=1|X+Y=z)$ is bounded from below by $\min(\frac{\gamma\delta}{\gamma\delta+(1-\gamma)(1-\delta)},\frac{\gamma(1-\delta)}{\gamma(1-\delta)+\delta(1-\gamma)})$, which only depend on parameters $\delta$ and $\gamma$, but not characteristics of $U$.
    \end{proof}
    We are ready to state the final theorem.
    \begin{theorem}
        Consider the binary symmetric channel with error parameter $\delta \in [0,\frac{1}{2})$. Consider a family of codes $\{\mathcal{C}_i\}_{i \in \mN}$ of length $n_i$ satisfying two properties:
        \begin{itemize}
            \item Let $
             X^{(i)} \sim Unif(\mathcal{C}_{i}),\, Y'^{(i)}=X^{(i)}+Z,\,Z \sim Ber(\delta)^{n_i}$. The following is satisfied: 
            \[d(\mathcal{C}_i, \delta):=\frac{H(X^{(i)} \mid Y'^{(i)})}{n_i}=o_{n_i}(1).\] 
            \item The code-associated bit-error probability satisfies $\forall j \in [n_i]: P_{\mathrm{bit},j}=P_{\mathrm{bit}}$.
        \end{itemize}
        Then $P_{\mathrm{bit}}=O_{n_i}\left(d(\mathcal{C}_i, \delta)^{1/3}\right)$.
    \end{theorem}
\begin{proof}
    Consider $Y''$ - a noisy version of $Y'$ with set $S$ from the algorithm, $X+Y'=Z,\,\, X+Y''=Z'$ and $\Lambda$ - a codeword in $\mathcal{C}_j$ that depends on $X,Y''$ and is conditionally independent from $Z_S$ and $S$ given $X,Y''$.
    For brevity, let $\pp(E\mid X, Y'')$ be the probability of an event $E$ conditioned on true codeword with instance $X$ and noisy codeword with instance $Y''$. Note that $Y''$ and $Z_S$ are independent conditioned on $S$. As such, for any set $W\subseteq S$ and $W'=S\backslash W$
    \[\pp(Y'_W=X_W, Y'_{W'} \neq X_{W'}\mid S,X,Y'')=(1-\delta)^{|W|}\delta^{|W'|}.\,\,\]
    That means that if $T=\{i: X_i\ne \Lambda_i\}$, the following is true:
    \begin{align*}
     &\pp(w_S(X+Y') \geq w_S(\Lambda+Y')\mid S,X,Y'')\\
     &=\pp(w_S(Z) \geq w_S((\Lambda+X)+Z)\mid S,X,Y'')\\
    &=\pp(w_{S\cap T}(Z) \geq w_{S \cap T}((\Lambda+X)+Z)\mid S,X, Y'')\\
    &=\sum_{i=0}^{\frac{|S \cap T|}{2}}\pp(w_{S \cap T} ((\Lambda+X)+Z)=i\mid S,X,Y'')\\
    &=\sum_{i=0}^{\frac{|S \cap T|}{2}}\binom{|S\cap T|}{i}\delta^{|S \cap T| - i}(1-\delta)^{i} \leq (4 \delta (1-\delta))^{\frac{|S \cap T|}{2}}.
    \end{align*}
    The final bound comes from the observation that
    \[\sum_{i=0}^{\frac{|S \cap T|}{2}}\binom{|S\cap T|}{i}\left(\frac{\delta}{1-\delta}\right)^{\frac{|S \cap T|}{2}-i} \leq \sum_{i=0}^{\frac{|S \cap T|}{2}}\binom{|S\cap T|}{i} = 2^{|S \cap T|-1} \leq 4^{\frac{|S \cap T|}{2}}.\]
    Note that
    \begin{align*}
    &\pp(w_S(Z) \geq w_S((\Lambda-X)-Z)\mid X,Y'')\\
    &=\sum_S\pp(w_S(Z) \geq w_S((\Lambda-X)-Z)\mid S,X,Y'')\pp(S\mid X,Y'').
    \end{align*}
    Denote $S' \subseteq S$ to be the induced error pattern
    ($S'=\{i \in [n]\mid (Z'+Z)_i=1\}$). Note the following:
    \begin{align*}
     &\pp(w_S(X+Y') \geq w_S(\Lambda+Y')\mid X,Y'')\\
     &=\E_{S}[\pp(w_S(X+Y') \geq w_S(\Lambda+Y')\mid S,X,Y'')\mid X,Y'']\\
     &\leq \sum_U \pp(S=U\mid X,Y'')(4 \delta (1-\delta))^{\frac{|U \cap T|}{2}} \\
     &=\sum_U\sum_V \pp(S=U, S'=V\mid X,Y'')(4 \delta (1-\delta))^{\frac{|U \cap T|}{2}} \\
     & \leq \sum_V \sum_U \pp(S'=V,S=U\mid X,Y'')(4 \delta (1-\delta))^{\frac{|V \cap T|}{2}}\\
     &=\sum_V (4 \delta (1-\delta))^{\frac{|V \cap T|}{2}}\pp(S'=V\mid X,Y'')\\
     &=\E_{S'} \left[(4 \delta (1-\delta))^{\frac{|S' \cap T|}{2}}\,\Big|\,X,Y''\right].
     \end{align*}
     To bound the last expectation, use the fact from the last proposition that the random bits $\{X_i \mid Z'=z\}_{i \in [n]}$ are independent with $\pp(X_i=1 \mid Z'=z) \geq 2C_{\delta, \gamma}=C'_{\delta,\gamma/2}=\theta_m(1)$, where $\gamma=\frac{2(\delta'-\delta)}{1-2\delta}$. Moreover, $X$ is independent from both $Z'$ and $Z$. Note that $|S' \cap T|=\sum_{t \in T}(Z+Z')_t$. Conclude that 
     \[\E\left[\frac{|S' \cap T|}{2}\,\bigg|\,X,Z'\right]=\frac{1}{2}\E\left[\sum_{t \in T}(Z+Z')_t \,\,\bigg|\,\, Z'\right] \geq C_{\delta, \gamma}|T|.\]
     By Hoeffding's inequality,
     \begin{align*}
    &\pp\left(\frac{|S' \cap T|}{2} \leq (C_{\delta, \gamma} - \sigma)|T|\,\Big|\,X,Y''\right)=\pp\left(\frac{|S' \cap T|}{2} \leq (C_{\delta, \gamma} - \sigma)|T|\,\Big|\,X,Z'\right) \\
    &\leq e^{\frac{-2\sigma^2|T|^2}{|T|}}=e^{-2\sigma^2|T|}.
    \end{align*}
    Taking $\sigma=\frac{C_{\delta, \gamma}}{2}$, we see that $\pp\left(\frac{|S' \cap T|}{2} \geq \frac{C_{\delta, \gamma}}{2}|T|\,\Big|\,X,Y''\right) \geq 1-e^{\frac{-C_{\delta, \gamma}^2|T|}{2}}$. This implies
    \[\E_{S'}\left[(4 \delta (1-\delta))^{\frac{|S' \cap T|}{2}}\,\Big|\,X,Y''\right] \leq e^{\frac{-C_{\delta, \gamma}^2|T|}{2}}+\Bigg(1-e^{\frac{-C_{\delta, \gamma}^2|T|}{2}}\Bigg)e^{\log(4\delta(1-\delta))\frac{C_{\delta, \gamma}}{2}|T|}.\]
    RHS can be interpreted as $e^{-\Omega_n(|T|)}.$ As such, $\pp(w_S(X-Y') \geq w_S(\Lambda-Y')\mid X,Y'') \leq e^{-C'|T|}$ for some $C'>0$ and large enough $m$, and by extension $|T|$. This implies
    \[\pp\left(\bigcup_{\Lambda \in L} w_S(X-Y') \geq w_S(\Lambda-Y')\,\Big|\,X,Y''\right) \leq |L|e^{-C'n_i d(\mathcal{C}_i, \delta)^{1/3}}.\]
    Here, $L$ is the list from the Lemma \ref{listdec} excluding the elements that differ from the original codeword by at most $n_id(\mathcal{C}_i, \delta)^{1/3}$ bits. As $|L| \leq 2^{\frac{C'}{2}n_id(\mathcal{C}_i, \delta)^{1/3}}$ for $c=\frac{2}{ C'd(\mathcal{C}_i, \delta)^{2/3}}$, $|L|e^{-C'n_id(\mathcal{C}_i, \delta)^{1/3}}=o_{n_i}\left(d(\mathcal{C}_i, \delta)^{1/3}\right)$.
    

    Finally, we compute the expected cardinality of error bits. With probability $O_{n_i}\left(d(\mathcal{C}_i, \delta)^{1/3}\right)$, the list does not contain the true codeword. In this case, the cardinality of error bits is at most $n_iO_{n_i}\left(d(\mathcal{C}_i, \delta)^{1/3}\right)$, and the respective contribution to the expectation is $n_i$. With probability $o_{n_i}\left(d(\mathcal{C}_i, \delta)^{1/3}\right)$, there exists a codeword that is more than $n_id(\mathcal{C}_i, \delta)^{1/3}$ bits away from the true codeword and is closer to $Y''$ on $S$ than the true codeword, so the respective contribution to the expectation is at most $n_io_{n_i}\left(d(\mathcal{C}_i, \delta)^{1/3}\right)$.

    Finally, in the remaining case, a codeword that is at most $n_id(\mathcal{C}_i, \delta)^{1/3}$ bits away from the true codeword is output, thus the respective contribution to the expectation is at most $n_id(\mathcal{C}_i, \delta)^{1/3}$. Overall, the expected cardinality of error bits is $O_{n_i}\left(d(\mathcal{C}_i, \delta)^{1/3}\right)$. The following inequalities are true:

    \[P_{\mathrm{bit}}=\frac{\sum_{i=1}^{n_i} P_{\mathrm{bit},i}}{n_i} \leq \frac{\E|\{i\mid i\text{-th bit decoded incorrectly}\}|}{n_i} \leq O_{n_i}\left(d(\mathcal{C}_i, \delta)^{1/3}\right)\]
\end{proof}

\begin{corollary}
    Consider the binary symmetric channel with error parameter $\delta \in [0,\frac{1}{2})$. Assume the parameters $m$ and $r_m$ satisfy the relation $\limsup_{m \rightarrow +\infty}\frac{\binom{m}{\le r_m}}{2^m}<1-H(\delta)$, where $0 \leq r_m \leq m$. The bit-error probability of the Reed-Muller code $RM(m,r_m)$ satisfies the following relation:
        \[P_{\mathrm{bit}}=2^{-\Omega_m(\sqrt{m})}.\]
\end{corollary}
\section{Strong capacity result and new additive combinatorics conjecture}\label{new_conj}

The following conjecture would potentially be useful in strengthening our result from a rate of  $2^{-\Omega_m(\sqrt{m})}$ to  $2^{-\Omega_m(\sqrt{m} \log(m))}$, which would then also imply a vanishing block-error probability up to Shannon capacity (in the complete sense of decoding the full messages) using the bit to block results from \cite{Abbe23}.

\begin{conjecture}
For any\footnote{Note that one can focus on $c_1 \in (0,11)$ since one can otherwise use \cite{gowers2023conjecturemarton}} $c_1>0$, there exists $c_2=exp(O_m(1/c_1))$ such that for any random variable $X$ valued in $\mathbb{F}_2^m$, $X'$ an independent copy of $X$, $m \in \mZ_+$, there exists a subspace $\Gc$ of dimension at most $c_2 H(X)$ such that:
\[H(U_\Gc+X)-H(U_\Gc) \leq (1+c_1) (H(X'+X)-H(X')).\]
\end{conjecture}

\begin{remark}
  Let $\Gc$ be a subspace of $\F_2^d$, where $d \in \mN$. One can show the following:
  \[H(U_\Gc+X)-H(U_\Gc)=H(\mathrm{Proj}_{\Gc^\perp}(X)).\]
  We infer this from $-H(U_\Gc+X)=\sum_{u \in \mathbb{F}_2^m} \pp(U_\Gc+X=u) \log_2 \pp(U_\Gc+X=u)
        =\sum_{u \in \Gc^\perp} \pp(X \in \Gc+u) \log_2 \frac{\pp(X \in \Gc+u)}{|\Gc|}
        =\sum_{u \in \Gc^\perp} \pp(X \in \Gc+u) \log_2 \pp(X \in \Gc+u) - H(U_\Gc)$.
  Here, the second equality is due to the fact that
  $\pp(U_\Gc+X=u)=\sum_{u_\Gc \in \Gc}\pp(U_\Gc=u_\Gc)\pp(X=u+u_\Gc)=\frac{\sum_{u_\Gc \in \Gc}\pp(X=u+u_\Gc)}{|\Gc|}=\frac{\pp(X \in \Gc+u)}{|\Gc|}$.
  Finally, the sum over $\Gc^\perp$ is exactly $-H(\mathrm{Proj}_{\Gc^\perp}(X))$.
\end{remark}

This is a relaxation of the result of \cite{gowers2023conjecturemarton} in the sense that it requires the variability of $X$ to mostly be along $\Gc$ instead of requiring that the probability distribution of $X$ be approximately equal to $U_\Gc$. On the flip side, it is asking for tighter constants and more explicitly constrained $\Gc$.

We leave it as an open problem to establish this conjecture and close the strong capacity result using the current entropy extraction approach.
\section{Acknowledgments}
We thank Jan Hazla, Avi Wigderson, Yuval Wigderson and Min Ye for stimulating discussions on the entropy extraction approach to Reed-Muller codes, as well as Frederick Manners, Florian Richter, Tom Sanders and Terence Tao for further feedback on additive combinatorics results.  
\newpage
\bibliography{RM.bib}


\bibliographystyle{amsplain}

\end{document}